\documentclass{article}
\usepackage[a4paper, total={6in, 9in}]{geometry}
\usepackage{xr}
\usepackage{hyperref}
\pdfoutput=1
\usepackage[utf8]{inputenc}
\usepackage{titling}
\usepackage{amsmath}
\usepackage[capitalise]{cleveref}
\usepackage{amssymb}
\usepackage{amsthm}
\usepackage{cite}
\usepackage{xcolor}
\usepackage{graphicx}
\usepackage{subcaption}
\usepackage{float}
\usepackage{paralist,enumerate}
\usepackage{appendix}
\usepackage{comment}
\usepackage{titlesec}
\usepackage{mathabx}
\usepackage{bm}
\usepackage[ruled,vlined]{algorithm2e}
\SetKw{KwBy}{by}
\SetAlFnt{\small}
\titlelabel{\thetitle.\quad}
\definecolor{azure}{rgb}{0.0, 0.5, 1.0}

\newtheorem{proposition}{Proposition}
\theoremstyle{remark}
\newtheorem*{remark}{Remark}

\title{Rank-based Bayesian clustering via covariate-informed Mallows mixtures}
\author{Emilie Eliseussen$^1$ \and Arnoldo Frigessi$^1$ \and Valeria Vitelli$^1$}
\date{%
    $^1$Oslo Centre for Biostatistics and Epidemiology, Department of Biostatistics, University of Oslo, Oslo, Norway\\[2ex]%
    E-mail: valeria.vitelli@medisin.uio.no\\[2ex]%
    February 16, 2024
}

\begin{document}

\maketitle

\begin{abstract}
    Data in the form of rankings, ratings, pair comparisons or clicks are frequently collected in diverse fields, from marketing to politics, to understand assessors' individual preferences. Combining such preference data with features associated with the assessors can lead to a better understanding of the assessors' behaviors and choices. The Mallows model is a popular model for rankings, as it flexibly adapts to different types of preference data, and the previously proposed Bayesian Mallows Model (BMM) offers a computationally efficient framework for Bayesian inference, also allowing capturing the users' heterogeneity via a finite mixture. We develop a Bayesian Mallows-based finite mixture model that performs clustering while also accounting for assessor-related features, called the Bayesian Mallows model with covariates (BMMx). BMMx is based on a similarity function that a priori favours the aggregation of assessors into a cluster when their covariates are similar, using the Product Partition models (PPMx) proposal. We present two approaches to measure the covariate similarity: one based on a novel deterministic function measuring the covariates' goodness-of-fit to the cluster, and one based on an augmented model as in PPMx. We investigate the performance of BMMx in both simulation experiments and real-data examples, showing the method's potential for advancing the understanding of assessor preferences and behaviors in different applications. 
    \\
    \\
    \textbf{Keywords:} Bayesian inference; clustering; rankings; covariate embedding; product partition models
    
\end{abstract}

\section{Introduction}\label{sec:intro}

Preference data are frequently encountered and utilized to assess the preferences of individuals, here referred to as "assessors". Assessors generate preference data in the form of (incomplete) rankings, ratings, pair comparisons, or by clicking on proposed items, often resulting in large sparse datasets. The primary objective is to uncover the missing parts of each assessor's ranking and identify the shared consensus ranking among assessors. However, this task is challenging due to the vast parameter space encompassing all possible permutations of the items ranked. The Mallows model \cite{mallows1957} has emerged as a promising choice to address this task, and recent advancements, such as the Bayesian Mallows model \cite{vitelli2018}, have introduced methods that enable to address the associated challenges in the utilization of arbitrary right-invariant distances, enhancing the model's flexibility. In this framework, the unknown parameters are the latent orderings of all items for each individual assessor, and the latent consensus ranking shared by all assessors in a homogeneous group. However, assuming a single consensus ranking for all assessors in a large pool is often unrealistic, as assessors differ group-wise in their preferences. Within each group of assessors with alike preferences one can assume that a latent ranking of items is shared. Consequently, it is necessary to partition assessors into homogeneous groups and perform inference on the model parameters within each group. For this purpose, previous research has proposed finite-mixture clustering methods \cite{vitelli2018, sorensen2019}. 

Recently, there has been a growing recognition of the use of rank-based models in data integration, as they offer a valuable approach not only for understanding individual behaviors but also for integrating diverse datasets \cite{deng2014, asfari2014, badgley2015, shang2020}.
In this paper, we wish to integrate in the analysis additional features that are often available about the assessors and that can help inference on the clustering of assessors. Covariate information, such as age, geographic location, gender, device used for and timing of the expressed preferences, and other behavioral traits, is known about each assessor, albeit potentially incomplete. Combining preference data with covariate information about the assessors can lead to a better understanding of the assessor’s behaviors and latent preferences. 

In this paper, we address the gap in existing methodologies to embed covariate information into rank-based clustering, by developing a novel Mallows-based finite mixture model, which we call the Bayesian Mallows model with covariates (BMMx). We propose a model-based method that a priori favours the aggregation of assessors into clusters, based on their covariate similarity. Rank data then enter the likelihood of the model and further influence the partition of the assessors. We adapt the Product Partition model with covariates (PPMx) \cite{muller2011} approach to our specific context, by eliciting covariate-dependent priors to explore various variants and alternatives. Specifically, we develop two ways to embed covariate similarity into the prior: one based on a novel deterministic function measuring the covariates' goodness-of-fit to the cluster, and one based on an augmented model. The posterior distribution, informed by the preference data expressed by each assessor, strikes a balance between covariate similarity and behavioral similarity. 

In the literature, there are some approaches that allow the integration of covariates with ranking data. One is the Plackett-Luce model \cite{luce1959, plackett1975}, as demonstrated in \cite{turner2020}, where a method for model-based clustering is presented. However, the Plackett-Luce model is less flexible with respect to different types of data-generating schemes. Li et al. \cite{li2022} develop an infinite mixture of Thurstone models to deal with available covariates for the items, however this mixture approach does not handle assessors' covariates, and moreover it does not estimate the latent individual ranking of each assessor. The Partition Mallows Model (PAMA) \cite{zhu2021} estimates the quality of each assessor, potentially assisted by covariate information if available. This model is based on the Bayesian algorithm for rank data (BARD) \cite{deng2014}, which assigns aggregated ranks based on the posterior probability that a certain item is relevant. However, PAMA works only for the Kendall distance in the Mallows model, in which case the partition function has  a closed form. In contrast, our approach can accommodate any right invariant distance measure. Additionally, PAMA does not perform clustering of the assessors. Gormley and Murphy \cite{gormley_murphy2008} propose a mixture of experts model for rank data, applying it to cluster Irish voters into similar voting preferences, using their votes as well as the associated covariates. A finite mixture model is implemented, where the influence of covariates on the mixing proportions is modeled via a generalized linear model estimated via an EM algorithm. Alternative not-rank-based methods to include covariates in clustering frameworks have also been developed, such as iCluster \cite{mo2017}, an integrative clustering method for genomic data based on a latent Gaussian mixture, where covariates can be included. Covariate-dependent multivariate functional clustering \cite{yang2022}  estimates a lower-dimensional representation of both the (functional) observations and covariates via a sparse latent factor model. Here the clustering model is assumed to depend only on a latent random effect, modeled with a Dirichlet Process (DP) mixture of Gaussian distributions. Note that both these two latter approaches assume equal influence of data and covariates on the clustering, as covariates are included and weighted equally to the data in the model (via an additional information layer and an additional factor in the model, respectively). This is quite different from our approach, where covariates instead inform the clustering a priori. Additional work on product partition models focuses on studying different ways to measure the covariates' similarity, with several approaches proposed in \cite{page_quintana2018}. However, the primary motivation behind this work is to address issues associated with using a Dirichlet Process prior when the number of covariates increases, resulting in very large or singleton clusters. These effects have not been observed in our finite mixture implementation.

Our proposed approach offers a comprehensive inferential procedure that uses both rankings and assessor-related covariate information, and that integrates clustering and preference learning into a unified framework. Operating within a probabilistic Bayesian setting, our method facilitates the computation of complex probabilities of interest, allowing for a nuanced and transparent analysis of the data. Our main contribution lies in developing the first Mallows-based method that incorporates assessor-related covariates. This novel approach holds significant potential for advancing the understanding of assessor preferences and behaviors. 

The structure of the paper is as follows: Section \ref{sec:model} describes the Bayesian Mallows model with covariates (BMMx) and presents the different measures of covariate similarity. The inferential procedure associated to BMMx is outlined in Section \ref{sec:mcmc}. To evaluate the performance of our method, we conduct several simulation studies, which are described in Section \ref{sec:simulations}. Furthermore, we demonstrate the versatility of the BMMx model in two real-world case studies presented in Section \ref{sec:case_study}. Finally, we summarize our findings and discuss potential future extensions of the model in Section \ref{sec:discussion}.

\section{Bayesian Mallows model with covariates (BMMx)}\label{sec:model}
A useful model for ranking data is the Bayesian Mallows model, which has been introduced in \cite{vitelli2018}. To incorporate covariate information into the Bayesian Mallows model, we let the covariates a priori inform the clustering by introducing a similarity function that favors the aggregation of assessors into a cluster when the covariates are similar. This approach is inspired by the Product Partition models (PPMx), however we develop it further by introducing two ways to measure the covariate similarity: one based on an augmented model similar to PPMx, and one based on a novel deterministic function measuring the covariates' goodness-of-fit to the cluster. 
We briefly recall the Bayesian Mallows model in the case of complete data in Section \ref{sec:model_bmm}, before introducing the covariate-dependent Bayesian Mallows clustering scheme with its variations in Section \ref{sec:model_bmmx}, as well as the choice of the similarity function in Section \ref{sec:model_bmmx_simfun}.

\subsection{Bayesian Mallows model (BMM)}\label{sec:model_bmm}
Consider a set of $n$ items denoted as $\mathcal{A}=\{A_1,A_2,\ldots,A_n\}$. We assume that all $n$ items are ranked according to a specified feature by $N$ assessors, thus providing complete rankings $\mathbf{R}_j= \{\mathbf{R}_{1j},\mathbf{R}_{2j},\ldots, \mathbf{R}_{nj}\}$, $j=1,\ldots,N$. The Mallows model \cite{mallows1957} is a probabilistic model for rankings defined on the space $\mathcal{P}_n$ of permutations of dimension $n$, taking the form
\begin{equation}\label{eq:Mallows}
    P(\bm{R} | \alpha, \bm{\rho}) = \frac{1}{Z_n(\alpha, \bm{\rho})}\exp\left\{-\frac{\alpha}{n} d(\bm{R}, \bm{\rho})\right\}1_{\mathcal{P}_n}(\bm{R}),
\end{equation}
where $\alpha > 0$ is a positive scale parameter, $\bm{\rho} \in \mathcal{P}_n$ is the latent consensus ranking, $Z_n(\alpha, \bm{\rho})$ is the partition function, $1_S(\cdot)$ is the indicator of the set $S$, and $d(\bm{R}, \bm{\rho})$ is a distance between $\bm{r}\in \mathcal{P}_n$ and $\bm{\rho}$.  

There are several choices for the distance function. For right-invariant distances, the partition function does not depend on  $\bm{\rho}$, since a right-invariant metric $d(\cdot,\cdot)$ is such that, for any $r_1, r_2 \in \mathcal{P}_n$, it holds: $d(r_1,r_2)=d(r_1 r_2^{-1}, \bm{1}_n)$, $\bm{1}_n= \{1,2,\ldots,n\}$ \cite{diaconis1988}. Thus, for right-invariant metrics, $Z_n(\alpha, \bm{\rho})=Z_n(\alpha)$, which gives important computational advantages. The partition function can be computed analytically for the Kendall distance, but this is not the case for most other distances, such as the footrule distance and the Spearman distance, and $Z_n(\alpha)$ must be approximated for inference. In this paper, we will use the footrule distance, which is defined as $d(\bm{R}, \bm{\rho})=\sum_{i=1}^n |R_i - \rho_i|$, the equivalent of an $\ell^1$ measure between rankings, but other options can easily be used. Note that, when the footrule distance is assumed, the scale parameter $\alpha$ can be interpreted analogously to the ``precision'' in a Gaussian model, as it quantifies the spread of the rankings around the ``mean'' consensus ranking $\bm{\rho}$. Several ways for approximating the model partition function when choosing the footrule or other distances have been described in \cite{vitelli2018}. We will use the same approaches here. 

The likelihood associated to the observed rankings $\mathbf{R}_1, \ldots, \mathbf{R}_N$ under the Mallows model is 
\begin{equation}
    P(\bm{R}_1, \ldots, \bm{R}_N | \alpha, \bm{\rho}) = \frac{1}{Z_n(\alpha)^N} \exp \left\{ -\frac{\alpha}{n} \sum_{j=1}^N d(\bm{R}_j, \bm{\rho})\right\} \prod_{j=1}^N 1_{\mathcal{P}_n}(\bm{R}_j).
\end{equation}

We specify priors for $\alpha$ and $\bm{\rho}$ as in \cite{vitelli2018} with a truncated exponential prior for $\alpha$ and a uniform prior for $\bm{\rho}$ in the space $\mathcal{P}_n$ of $n$-dimensional permutations:
\begin{equation}\label{eq:bmm_priors}
    \pi(\alpha) = \lambda e^{-\lambda\alpha} \frac{1_{[0, \alpha_{\max}]}(\alpha)}{(1-e^{-\lambda \alpha_{\max}})}, \hspace{1cm} \pi(\bm{\rho}) = \frac{1}{n!}1_{\mathcal{P}_n}(\bm{\rho})
\end{equation} 
where $\alpha_{\max} < \infty$ is a cut-off point and is large compared to the value of $\alpha$ supported by the data. The posterior distribution for $\bm{\rho}$ and $\alpha$ then becomes
\begin{equation}\label{eq:bmm_posterior}
    P(\alpha, \bm{\rho} | \bm{R}_1, \ldots, \bm{R}_N ) \propto \frac{\pi(\alpha)\pi(\bm{\rho})}{Z_n(\alpha)^N} \exp \left\{-\frac{\alpha}{n}\sum_{j=1}^N d(\bm{R}_j,\bm{\rho)}\right\}.
\end{equation}
To perform inference, \cite{vitelli2018} proposed a Markov Chain Monte Carlo (MCMC) algorithm based on a Metropolis-Hastings scheme (see \cite{sorensen2019} for details on the implementation). A finite mixture Mallows model has already been introduced in \cite{vitelli2018} to handle the assessor heterogeneity with a model-based approach.

\subsection{Bayesian Mallows model with covariates (BMMx)}\label{sec:model_bmmx}
The Bayesian Mallows model can be modified to account for assessor-related covariates by incorporating the covariate information in the mixture of Mallows models for the ranking data, so that covariates inform the clustering assignment of the assessors. To achieve this, we use a similarity function that favors the clustering of assessors when their covariates are similar, drawing inspiration from the PPMx framework. 

Let $z_1,\ldots, z_N$ be a set of cluster assignment latent variables assigning the assessors into one of $C$ clusters, and we can equivalently define a partition $S$ of the $N$ assessors into $C$ clusters, denoted as $S=\{S_1,\ldots,S_C\}$, where $z_j := c \Leftrightarrow j \in S_c$ for $j=1,\ldots,N$ and $c=1,\ldots,C$. Here the number of clusters $C$ is assumed fixed and known. Furthermore, let $\bm{x}_1,\ldots,\bm{x}_N$ denote the covariate vectors for each of the $N$ assessors, where $\bm{x}_j=(x_{j1},\ldots,x_{jK})$ is a vector containing all $K$ covariates measured for the $j$-th assessor. Covariates can be continuous, categorical, or count-type, and we first assume that the covariates are complete (i.e. no missingness in the covariate data).  

Coherently to \cite{vitelli2018}, the rankings given by the assessors belonging to cluster $c \in \{1, \ldots, C\}$ are assumed to follow a Mallows model with parameters $\alpha_c$ and $\bm\rho_c$. Assuming conditional independence between the $C$ clusters, the augmented data formulation of the likelihood for the observed rankings $\bm{R}_1, \ldots, \bm{R}_N$ is given by
\begin{equation}\label{eq:bmmx_likelihood}
    P(\bm{R}_1, \ldots, \bm{R}_N | \{\bm{\rho}_c, \alpha_c \}_{c=1}^C, z_1,\ldots, z_N) =  \prod_{j=1}^N \frac{1_{\mathcal{P}_n}(\bm{R}_j)}{Z_n(\alpha_{z_j})} \exp \left\{ - \frac{\alpha_{z_j}}{n} d(\bm{R}_j, \bm{\rho}_{z_j}) \right\}.
\end{equation}
To include the covariates $\bm{x}_1,\ldots, \bm{x}_N$, we follow the PPMx framework \cite{muller2011}, where it is assumed that the cluster partitions are a priori distributed according to 
\begin{equation}\label{eq:muller_prior}
    P(S | \bm{x}_1,\ldots,\bm{x}_N; \bm{\tau}) \propto \prod_{c=1}^C g(\mathcal{X}_c) h(S_c, \tau_c)
\end{equation}
where $\bm{\tau} = (\tau_1, \ldots, \tau_C),$ with $\tau_c \geq 0$, $c=1,\ldots,C$ and $\sum_{c=1}^C \tau_c =1$ being the mixture parameters. We define $\mathcal{X}_c=\{ \bm{x}_j, j \in S_c \}$ as the set of covariate vectors $\bm{x}_j$ corresponding to assessors assigned to cluster $c$, for $c=1,\ldots,C$. Here $g(\cdot)$ is a similarity function, which measures the similarity of the covariates of the assessors in a cluster $c$, and $h(\cdot)$ is a cohesion function, which measures how likely it is a priori that the assessors in $S_c$ are in the same cluster. In \cite{muller2011} arguments were provided in support of the fact that a well-defined similarity function would need to fulfill two criteria: 
\begin{enumerate}[(i)]
    \item symmetry with respect to permutations of the sample indices $j$, i.e., $g(\mathcal{X}_c) = g(\bm{x}_{j_1},\ldots,\bm{x}_{j_c}) = g(\bm{x}_{\sigma(j_1)},\ldots,\bm{x}_{\sigma(j_c)})$ with $S_c = \{j_1,\ldots,j_c\}$ and $\sigma(\cdot)$ is a permutation operator of dimension $j_c$.
    \item scales across sample sizes, so that $g(\mathcal{X}_c) = \int g(\mathcal{X}_c, \bm{t})\mathrm{d}\bm{t} = \int g(\bm{x}_{j_1},\ldots,\bm{x}_{j_c}, \bm{t})\mathrm{d}\bm{t}$, i.e., the similarity of any cluster is the similarity of any augmented cluster.
\end{enumerate}
We will discuss these two criteria, and possible alternative choices, when discussing the choice of the similarity function $g(\cdot)$ for BMMx in Section \ref{sec:model_bmmx_simfun}.
We a priori assume that the cluster partitions are distributed according to
\begin{equation}\label{eq:bmmx_cluster_lab_prior}
    P(S| \bm{x}_1, \ldots, \bm{x}_N; \bm{\tau}) \, \propto \,  \prod_{c=1}^C \tau_{c}^{|S_c|} g(\mathcal{X}_c) 
\end{equation}
where, for coherence with BMM without covariates, the cohesion function of the $c$-th cluster is simply defined as $\tau_c^{|S_c|}$, the probability that an assessor belongs to the $c$-th cluster (i.e the mixture parameter, or cluster probability). The cluster probabilities are assigned a standard symmetric Dirichlet prior with concentration parameter $\psi$,
\begin{equation}\label{eq:cluster_prob_prior}
    P(\bm{\tau} | \psi) = \Gamma(\psi C)\Gamma(\psi)^{-C}\prod_{c=1}^C \tau_c^{\psi-1}.
\end{equation}
The function $g(\cdot)$ is the similarity function as defined under the PPMx framework and is further discussed in Section \ref{sec:model_bmmx_simfun}.
For ease of notation, we denote $\bm{R} = \{\bm{R}_1,\ldots,\bm{R}_N\}$ and $\bm{x} = \{\bm{x}_1,\ldots,\bm{x}_N\}$. The full posterior for all parameters of interest is
\begin{align*}
    P(S, \{\bm{\rho}_c, \alpha_c, \tau_c \}_{c=1}^C | \bm{R}, \bm{x}) \, \propto \, P(\bm{R}| \{\bm{\rho}_c, \alpha_c, \tau_c \}_{c=1}^C, S, \bm{x}) P(S, \{\bm{\rho}_c, \alpha_c, \tau_c \}_{c=1}^C | \bm{x}).
\end{align*}
We assume a priori that $\bm{x}$ only influences the cluster partitions $S$, and that $\bm{R}$, $\bm{\rho}$ and $\alpha$ are conditionally independent of $\bm{x}$ and $\tau$ given $S$. We also assume that the cluster assignment probabilities $\tau$ are independent of the covariates $\bm{x}$. In the formulation below, note that we can equivalently define the cluster labels $z_1,\ldots, z_N$ when $S$ is known. The full posterior then becomes:
\begin{equation}\label{eq:bmmx_posterior}
    P(S, \{\bm{\rho}_c, \alpha_c, \tau_c \}_{c=1}^C | \bm{R}, \bm{x}) \, \propto \, 
      P(\bm{R} | S, \{\bm{\rho}_c, \alpha_c \}_{c=1}^C) P(\{\bm{\rho}_c \}_{c=1}^C | S) P(\{\alpha_c \}_{c=1}^C | S) P(S | \bm{\tau}, \bm{x}) P(\bm{\tau})
\end{equation}
where $P(\bm{R} | S, \{\bm{\rho}_c, \alpha_c \}_{c=1}^C)$ is as in equation \eqref{eq:bmmx_likelihood}, $P(\{\bm{\rho}_c \}_{c=1}^C | S ) = \prod_{c=1}^C \pi(\bm{\rho}_c)$ and $P(\{\alpha_c \}_{c=1}^C | S)=\prod_{c=1}^C \pi(\alpha_c)$ with $\pi(\bm{\rho}_c)$ and $\pi(\alpha_c)$ as in equation \eqref{eq:bmm_priors}, $P(S | \{\tau_c \}_{c=1}^C, \bm{x})$ is as in equation \eqref{eq:bmmx_cluster_lab_prior} and $P(\bm{\tau})$ as in equation \eqref{eq:cluster_prob_prior}.

\subsection{Choice of the similarity function $g(\cdot)$}\label{sec:model_bmmx_simfun}

We propose two methods for measuring the similarity function: an augmented model approach, and a goodness-of-fit approach. By presenting both the augmented model and the goodness-of-fit function approach, we offer flexibility in selecting the most suitable method for incorporating covariate information in the clustering process, depending on the specific data and research question at hand.

Let $\mathcal{X}_{ck} = \{x_{jk}, j \in S_c\}$ be the covariate set associated with cluster $c$ for covariate $k$. 
The similarity function used in \eqref{eq:bmmx_cluster_lab_prior} is defined as
\begin{equation}\label{eq:muller_similarity_K}
    g(\mathcal{X}_{c}) = \prod_{k=1}^K g_k(\mathcal{X}_{ck}) 
\end{equation}
where $g_k(\cdot)$ is the similarity function associated with covariate $k$, which depends on the covariate type. The similarity function is a key model component, and it has to be chosen appropriately. 

\subsubsection{Augmented similarity function}

A possible choice for the similarity function is proposed in \cite{muller2011}: the covariates of the assessors assigned to cluster $c$ are random variables distributed according to a shared auxiliary prior $q(x_c |\xi_c)$. This prior is the same for each cluster, except for the cluster-specific hyperparameters $\bm{\xi}_c$, which also have a prior (here denoted by $q_{\xi}$). The prior of the cluster-specific hyperparameters is then integrated out, to obtain the similarity function via an augmented model strategy. In this way, the similarity between assessors is measured in terms of how their covariates fit to a common distribution. The similarity function from \cite{muller2011} takes the form
\begin{equation}\label{eq:muller_similarity}
    g_k(\mathcal{X}_{ck}) = \int \prod_{j \in S_c}q(x_{jk} | \bm{\xi}_{z_j}) q_{\xi}(\bm{\xi}_{z_j}) \mathrm{d} \bm{\xi}_{z_j}.
\end{equation}
It is shown in \cite{muller2011} that \eqref{eq:muller_similarity} satisfies the two criteria (i) and (ii) above. 
The choice of \cite{muller2011} greatly simplifies posterior inference, as it is easier to modify the similarity function with an added distributional factor for each covariate type, compared to constructing a joint model for each covariate in various data formats. We outline a possible strategy for measuring the similarity function via the augmented model for continuous and categorical covariates below.

In the case of the $k$-th covariate being continuous, we suggest using the augmented model in equation \eqref{eq:muller_similarity_K} with $x_{jk} \sim N(\mu_c, \sigma)$ for $j \in S_c$, and a single hyperparameter $\xi_c = \mu_c $ with a normal conjugate hyperprior $\mu_c \sim N(\mu_0, \sigma_0)$. $\sigma$ is fixed to the same value for all clusters as suggested in \cite{muller2011}, but a hyperprior for $\sigma$ could also be used. 
The similarity function is then,
\begin{equation}\label{eq:sim_fun_aug_continuous}
    g_k(\mathcal{X}_{ck}) \, \propto \, \left( \frac{\Tilde{\sigma}}{\sigma}\right)^{|S_c|} \exp{\left\{\frac{1}{2}  \sum_{j \in S_c} \left( \frac{\Tilde{\mu_j}^2}{\Tilde{\sigma}^2} - \frac{x_{jk}^2}{\sigma^2} \right) \right\} }
\end{equation}
using the normal-normal conjugacy (see \cref{supp:sim_fun_cont} for details on the derivation of equation \eqref{eq:sim_fun_aug_continuous}) where $\Tilde{\mu}_j = \sigma_0^2x_{jk}/(\sigma^2 + \sigma_0^2) + \sigma^2 \mu_0 /(\sigma^2 + \sigma_0^2)$ and $\Tilde{\sigma} = (\sigma^{-2}+ \sigma_0^{-2})^{-1}$.
To fix reasonable values for $\sigma$, $\mu_0$, and $\sigma_0$, we have set $\sigma = c_1 \hat{s}_x$, $\sigma_0= c_2 \hat{s}_x$, $\mu_0 = \Bar{x}$, where $\hat{s}_x$ is the empirical variance of the covariate and $\Bar{x}$ is the mean of the covariate. $c_1$ and $c_2$ need to be set based on the data: $c_1$ is a scaling factor for the variance of the augmented distribution of the covariate, and $c_2$ should be reasonably large (in the context of the data at hand). In Section \ref{sec:simulations} we set up a simulation study to explore possible strategies to tune these hyperparameters. 

If the $k$-th covariate is instead categorical with $B$ categories, a default choice is based on a Multinomial prior. We assume $x_{jk} \sim \mathcal{M}(\nu_{c,1}, \ldots, \nu_{c,B})$, for $j \in S_c$ and with $B$ categories, so that $\bm{\nu}_c=(\nu_{c,1},\ldots,\nu_{c,B})$ is the vector of hyperparameters defining the probabilities of the covariate categories as $P(x_{jk}=b) = \nu_{c,b}$. Analogously to the continuous covariate case, we can assume the conjugate Dirichlet hyperprior $q_{\xi}$ so that $\bm{\nu}_c \sim \mathcal{D}(\varphi_1,\ldots,\varphi_B)$. Then the similarity function is
\begin{equation}\label{eq:sim_fun_aug_categorical}
    g_k(\mathcal{X}_{ck}) \, \propto \, \prod_{b=1}^B \frac{ \Gamma((\sum_{j \in S_c} 1_{x_{jk}=b}) + \varphi_b) }{ \Gamma( \sum_{b=1}^B [(\sum_{j \in S_c} 1_{x_{jk}=b}) + \varphi_b]) }
\end{equation}
which is a Dirichlet-Multionomial model without the multinomial coefficient (see \cref{supp:sim_fun_categorical} for a detailed derivation of equation \eqref{eq:sim_fun_aug_categorical}). For sake of simplicity, we assume $\varphi_1=\ldots=\varphi_B=\varphi$ (i.e. symmetric Dirichlet distribution), and show a possible strategy to tune this hyperparameter in Section \ref{sec:simulations}.

\subsubsection{Goodness-of-fit similarity function}
The second proposed approach for measuring the similarity between covariates is instead based on a deterministic goodness-of-fit function, and thus has a quite different rationale from the augmented model proposed in \cite{muller2011}. In this approach, the goodness-of-fit function measures the distance between the observed covariate values and the average covariate values of the assessors in a given cluster. This approach provides a simple yet effective method for incorporating covariate information into the clustering process, and can be particularly useful in cases where the augmented model may not be appropriate due to the data-generating process associated to the covariate. Nonetheless, this is a completely nonparametric approach, which is not limited by any distributional assumption. 

This goodness-of-fit similarity function takes the following form
\begin{equation}\label{eq:sim_fun_alternative}
    g_k(\mathcal{X}_{ck}) = \frac{1}{|S_c|} \sum_{j \in S_c} \frac{\delta(x_{jk},\Bar{x}_{ck})}{\sum_{l=1}^C \delta(x_{jk},\Bar{x}_{lk})} ,
\end{equation}
where $\Bar{x}_{ck}$ is the $k$-th covariate cluster centroid, and $\delta(\cdot, \cdot)$ is a similarity function measuring the ``closeness'' of $x_{jk}$ to the covariate cluster centroid. The function $g(\cdot)$ is in other words measuring the covariates' goodness-of-fit to the cluster. There are several options for choosing $\delta(\cdot, \cdot)$ and $\Bar{x}_{ck}$, and the choice will depend on the type of covariate, and the data application. 

As evident from \eqref{eq:sim_fun_alternative}, this similarity is completely deterministic, and will induce an improper prior on the cluster partition. Therefore, while criterion (i) introduced above for a similarity function being well-defined is still sensible in this case, criterion (ii) makes no particular sense, since $g(\cdot)$ is not a probability measure. Other criteria might be more suitable here, like the following invariance criterion with respect to \emph{nuisance covariates} (i.e., covariates that show no variation). Let us define a nuisance covariate such that $\mathcal{X}^\star:=\{x^\star_j=x^\star\ \ \forall j=1,\ldots,N\}$. Then,
\begin{itemize}
    \item[(iii)] the similarity function is \emph{invariant} if $g(\mathcal{X}_c \cup \mathcal{X}^\star) \equiv a \cdot g(\mathcal{X}_c)$ for some constant value $a$ independent on the covariate value.
\end{itemize}

\begin{proposition}
    The similarity function \eqref{eq:sim_fun_alternative} satisfies the two criteria (i) and (iii) above. 
\end{proposition}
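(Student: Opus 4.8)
The plan is to verify the two criteria separately, treating criterion (i) as a direct consequence of the set-based definition of $g_k$ and criterion (iii) as a short computation exploiting the constancy of the nuisance covariate.

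For criterion (i), I would first observe that every ingredient of $g_k(\mathcal{X}_{ck})$ in \eqref{eq:sim_fun_alternative} depends on $S_c$ only through the (multi)set $\mathcal{X}_{ck}=\{x_{jk}: j\in S_c\}$. Indeed, $|S_c|$ is the cardinality of this set, the centroid $\bar{x}_{ck}$ is by construction a symmetric aggregate of the covariate values in the cluster (e.g.\ their arithmetic mean), and the outer operation $\sum_{j\in S_c}$ is a sum over the set, hence insensitive to the order of summation. Applying a permutation $\sigma$ to the indices $\{j_1,\ldots,j_c\}$ therefore leaves each of these quantities unchanged, so $g_k(\mathcal{X}_{ck})$ is invariant. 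Since $g(\mathcal{X}_c)=\prod_{k=1}^K g_k(\mathcal{X}_{ck})$ from \eqref{eq:muller_similarity_K} is a product of symmetric factors, it is itself symmetric, establishing (i).

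For criterion (iii), the key step is to compute the factor contributed by a nuisance covariate. I would index the nuisance covariate by $K+1$ and write $x^\star_j=x^\star$ for all $j$. Because every assessor shares the same value, the centroid of this covariate in \emph{every} cluster reduces to $x^\star$, i.e.\ $\bar{x}^\star_{lk}=x^\star$ for all $l=1,\ldots,C$ (this holds for any reasonable centroid, since the centroid of a constant equals that constant). Substituting into \eqref{eq:sim_fun_alternative}, every similarity evaluation collapses to the self-similarity $\delta(x^\star,x^\star)$, so each summand becomes
\begin{equation*}
    \frac{\delta(x^\star,x^\star)}{\sum_{l=1}^C \delta(x^\star,x^\star)}=\frac{1}{C}.
\end{equation*}
Averaging $|S_c|$ such identical terms yields $g_{K+1}(\mathcal{X}^\star_c)=\tfrac{1}{C}$, independent of both $x^\star$ and of the cluster. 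Hence $g(\mathcal{X}_c\cup\mathcal{X}^\star)=\tfrac{1}{C}\,g(\mathcal{X}_c)$, which is precisely the invariance requirement with $a=1/C$.

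Neither step involves a genuine difficulty; the only point requiring care is well-definedness of the ratio, namely that $\delta(x^\star,x^\star)\neq 0$. Since $\delta$ is a similarity (closeness) rather than a distance, $\delta(x,x)$ attains its maximal, strictly positive value, so the cancellation to $1/C$ is legitimate. The main conceptual obstacle, if any, is simply recognizing that the centroid of a constant covariate coincides across all clusters, which is exactly what forces every $\delta$ term in the nuisance factor to collapse to a single common value.
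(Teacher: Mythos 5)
Your proof is correct and follows the same route as the paper's: symmetry of the sum gives criterion (i), and the collapse of every nuisance-covariate term to $1/C$ gives criterion (iii) with $a = 1/C$, exactly as the paper states. The only difference is that you spell out the computation (constant centroid, cancellation of $\delta(x^\star,x^\star)$, positivity of the self-similarity) that the paper's one-line proof leaves implicit.
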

\begin{proof}
    Criterium (i) of symmetry holds as the covariates enter a sum, which is a symmetric operator with respect to its arguments. Criterium (iii) of covariate invariance is a direct consequence of the similarity definition in \eqref{eq:sim_fun_alternative}, with $a = 1/C.$ 
\end{proof}

In the case of a continuous covariate $k$, we propose and use throughout this paper
\begin{equation}\label{eq:sim_fun_alternative_cont}
    g_k(\mathcal{X}_{ck}) = \frac{1}{|S_c|} \sum_{j \in S_c} \frac{(1+\theta| x_{jk}-\Bar{x}_{ck}|)^{-1}}{\sum_{l=1}^C (1+\theta| x_{jk}-\Bar{x}_{lk}|)^{-1}} ,
\end{equation}
where $\theta$ is a hyperparameter and $\Bar{x}_{ck}$ is the average over the $k$-th covariate of the assessors in cluster $c$. 
In the case of a categorical covariate $k$, we propose 
\begin{equation}\label{eq:sim_fun_alternative_cat}
    g_k(\mathcal{X}_{ck}) = \frac{1}{|S_c|} \sum_{j \in S_C} \frac{1 + \gamma 1_{x_{jk} = \Bar{x}_{ck}}}{\sum_{l=1}^C 1 + \gamma 1_{x_{jk} = \Bar{x}_{lk}}} ,
\end{equation}
where $\gamma$ is a hyperparameter, and the covariate cluster centroid  $\Bar{x}_{ck}$ is the mode of the $k$-th covariate of the assessors in cluster $c$. 

The hyperparameters $\theta$ and $\gamma$ determine how much weight is given to the covariates as compared to the data in the mixture model, and therefore during the clustering estimation process. Larger values of $\theta$ and $\gamma$ indicate a higher degree of importance assigned to the covariates in estimating the clusters, while lower values give more importance to the data. We demonstrate how to tune these hyperparameters in a sensitivity study in Section \ref{sec:simulations}.

\begin{remark}
    Note that up until now, the prior for the cluster partitions is defined via a product over the partitions $S_1,\ldots,S_C$, as in \eqref{eq:muller_prior}, while for the inference in BMMx we will instead define it as a product over the cluster labels $z_1,\ldots, z_N$. These are equivalent, as there is a one-to-one correspondence between the cluster labels and the cluster partitions.
\end{remark}

\section{Inference in the BMMx via MCMC}\label{sec:mcmc}

Since the posterior distribution in equation \eqref{eq:bmmx_posterior} is intractable, to sample from it we build a Metropolis-within-Gibbs MCMC algorithm iterating between three steps: a Gibbs step where we update the cluster assignment probabilities, a Metropolis-Hastings step where we update the cluster-wise consensus ranking and scale parameter, and another Gibbs step where we update the cluster label assignments.

In the first step of the algorithm, the cluster assignment probabilities $\tau_1,\ldots,\tau_C$ are updated in a Gibbs sampler step. Since $\bm{\tau} \sim \mathcal{D}(\psi)$ a priori, which is conjugate to the multinomial likelihood, and since $\bm{\tau}$ only depends on $\bm{z}$ a posteriori,  the full conditional distribution takes the form $\bm{\tau}|\bm{z} \sim \mathcal{D}(\psi + n_1,\ldots,\psi+n_C)$, $n_c=\sum_{j=1}^N 1_c(z_j)$ for $c=1,\ldots,C$. 

The second step of the algorithm updates the consensus rankings $\bm{\rho}_1,\ldots,\bm{\rho}_C$ and the scale parameters $\alpha_1,\ldots,\alpha_C$ in two separate Metropolis-Hastings steps. For each $c=1,\ldots,C$, a new consensus ranking $\bm{\rho}'_c \in \mathcal{P}_n$ is proposed using the ``leap-and-shift'' proposal distribution described in \cite{vitelli2018}. The acceptance probability for updating $\bm{\rho}_c$ in the MCMC algorithm is
\begin{equation}\label{eq:accept_prob_rho}
    \min\left\{ 1, \frac{P_l(\bm{\rho}_c|\bm{\rho}_c')}{P_l(\bm{\rho}_c'|\bm{\rho}_c)}\exp \left[ -\frac{\alpha_c}{n} \left( \sum_{j:z_j=c} d(\mathbf{R}_j,\bm{\rho}_c') - \sum_{j:z_j=c} d(\mathbf{R}_j,\bm{\rho}_c)\right)\right] \right\}.
\end{equation}
In equation (\ref{eq:accept_prob_rho}) above, $P_l$ denotes the probability mass function associated to moving from one $n$-dimensional ranking vector to another according to the leap-and-shift proposal. The parameter $l$ denotes the number of items perturbed in the consensus ranking $\bm{\rho}_c$ to get a new proposal $\bm{\rho}_c'$, and is used to tune the acceptance probability. $d(\cdot, \cdot)$ is the distance measure introduced in Section \ref{sec:model_bmm}. 
For the scale parameters $\alpha_1,\ldots,\alpha_c$, for each $c=1,\ldots,C$ we sample $\alpha_c'$ from a log-normal distribution $\log \mathcal{N}(\log(\alpha_c), \sigma_\alpha^2)$. The proposal is accepted with probability
\begin{equation}\label{eq:accept_prob_alpha}
    \min\left\{ 1, \frac{Z_n(\alpha_c)^N \pi(\alpha_c')\alpha_c'}{Z_n(\alpha'_c)^N \pi(\alpha_c)\alpha_c}\exp \left[ -\frac{(\alpha_c' - \alpha_c)}{n}  \sum_{j:z_j=c} d(\mathbf{R}_j,\bm{\rho}_c) \right] \right\}
\end{equation}
where $\sigma_\alpha^2$ is used to tune the acceptance probability. An additional parameter, denoted as $\alpha_{\textnormal{jump}}$, can also be used to update $\alpha_c$ every $\alpha_{\textnormal{jump}}$ updates of $\bm{\rho}_c$ to promote better mixing of the MCMC and to avoid unnecessary (and time-consuming) updates of a single scale parameter as compared to the multivariate updating of the consensus ranking.

The cluster label assignments $z_1, \ldots, z_N$ are updated in the third and final step of the algorithm. This is done in a Gibbs sampler step, where for each $j=1,\ldots,N$ we sample from the full conditional distribution
\begin{equation} \label{eq:gibbs_clusterlabels}
    P(z_j = c | z_{-j}, \bm{\rho}_c, \alpha_c, \tau_c, \bm{R}_j, \{\mathcal{X}_{ck}\}_{k=1}^K) \, \propto \, \frac{1}{Z_n(\alpha_{c})} \exp\left\{-\frac{\alpha_c}{n} d(\bm{R}_j, \bm{\rho}_{c})\right\} \tau_c \prod_{k=1}^K g_k(\mathcal{X}_{ck})
\end{equation}
where $z_{-j}$ are all labels $\bm{z}$ excluding $z_j$, and $\mathcal{X}_{ck} = (x_{lk}, l : \{z_l = c \} \cup \{j\}, l= 1,\ldots, N)$. See the detailed derivation of the full conditional distribution for $z_j$ in \cref{supp:gibbs_sampling}.

The MCMC algorithm described above is summarized in Algorithm \ref{alg:bmmx_mcmc}. Note that \emph{aug} is a boolean that activates the use of the augmented similarity function, and $M$ is the number of MCMC iterations. 

\begin{algorithm}[!htb]
\SetAlgoLined
\scriptsize
\textbf{input:} $\mathbf{R}_1,\ldots,\mathbf{R}_N$,  $\mathbf{x}_1,\ldots,\mathbf{x}_N$; $C$, $\psi$, $c_1$, $c_2$, $\varphi$, $\theta$, $\gamma$, $\sigma_\alpha^2$, $\alpha_{\textnormal{jump}}$, $d(\cdot, \cdot)$, $l$, $Z_n(\alpha)$, $M$, $aug$ \\
\textbf{output:} posterior distributions of $\bm{\rho}_1,\ldots,\bm{\rho}_C$, $\alpha_1,\ldots,\alpha_C$, $\tau_1,\ldots,\tau_C$, $z_1,\ldots,z_N$\\
 \textbf{initialization:} randomly generate $\bm{\rho}_{1,0},\ldots,\bm{\rho}_{C,0}$, $\alpha_{1,0},\ldots,\alpha_{C,0}$, $\tau_{1,0},\ldots,\tau_{C,0}$ and $z_{1,0},\ldots,z_{N,0}$ \\
\For{$m \gets 1$ \KwTo $M$}{
    \textbf{Gibbs step}: update $\tau_1,\ldots,\tau_C$ \\
    compute: $n_c=\sum_{j=1}^N 1_c(z_{j,m-1})$ for $c=1,\ldots,C$ \\
    sample: $\tau_1,\ldots,\tau_C \sim \mathcal{D}(\psi + n_1,\ldots,\psi+n_C)$ \\
    \For{$c \gets 1$ \KwTo $C$}{
        \textbf{M-H step}: update $\bm{\rho}_c$ \\
        sample: $\bm{\rho}_c' \sim \textnormal{LS}(\bm{\rho}_{c,m-1}, l)$ and $u \sim \mathcal{U}(0,1)$ \\
        compute: \textit{ratio} $\gets$ equation \eqref{eq:accept_prob_rho} with $\bm{\rho}_c \gets \bm{\rho}_{c,m-1}$, $\alpha_c \gets \alpha_{c,m-1}$ and $z_j \gets z_{j,m-1}$\\
        \eIf{$u <$ ratio}{$\bm{\rho}_{c,m} \gets \bm{\rho}'_{c}$}{$\bm{\rho}_{c,m} \gets \bm{\rho}_{c,m-1}$}\leavevmode
        \textbf{M-H step}: update $\alpha_c$ \\
        sample: $\alpha_c' \sim \log \mathcal{N}(\log(\alpha_{c,m-1}), \sigma_\alpha^2)$ and $u \sim \mathcal{U}(0,1)$ \\
        compute: \textit{ratio} $\gets $ equation \eqref{eq:accept_prob_alpha} with $\bm{\rho}_c \gets \bm{\rho}_{c,m}$, $\alpha_c \gets \alpha_{c,m-1}$ and  $z_j \gets z_{j,m-1}$\\
        \eIf{$u <$ ratio}{$\alpha_{c,m} \gets \alpha'_{c}$}{$\alpha_{c,m} \gets \alpha_{c,m-1}$}
    }\leavevmode
    \textbf{Gibbs step}: update $z_1,\ldots,z_N$ \leavevmode \\
    \For{$j \gets 1$ \KwTo $N$}{
    \For{$c \gets 1$ \KwTo $C$}{
        \eIf{aug}{
            \lIf{$x$ continuous}{$g_k(\mathcal{X}_{ck})$ $\gets$ equation \eqref{eq:sim_fun_aug_continuous}}
            \lIf{$x$ categorical}{$g_k(\mathcal{X}_{ck})$ $\gets$ equation \eqref{eq:sim_fun_aug_categorical}}
            }{
            \lIf{$x$ continuous}{$g_k(\mathcal{X}_{ck})$ $\gets$ equation \eqref{eq:sim_fun_alternative_cont}}
            \lIf{$x$ categorical}{$g_k(\mathcal{X}_{ck})$ $\gets$ equation \eqref{eq:sim_fun_alternative_cat}}
            }
    		compute: $p_{cj} \gets$ equation \eqref{eq:gibbs_clusterlabels} with $\tau_c \gets \tau_{c,m}$, $\alpha_c \gets \alpha_{c,m}$ and $\bm{\rho}_c \gets \bm{\rho}_{c,m}$}
    		sample: $z_{j,m} \sim \mathcal{M}(p_{1j},\ldots,p_{Cj})$
    	}
}
\caption{MCMC scheme for inference in BMMx} \label{alg:bmmx_mcmc}
\end{algorithm}

\subsection{Incomplete data}\label{sec:model_mcmc_missing}
It is common in real-world applications that data are incomplete, and a high level of missingness might pose challenges to the analysis. In the case of BMMx, missing data can occur in both the ranking data $\bm{R}$ and the covariates $\bm{x}$. For instance, in the case of ranking data, it is not unusual for assessors to rank only a subset of items: ranks can be missing at random, the assessors may only have ranked the, in their opinion, top-$k$ items, or they can be presented with only a subset of items that they rank. Similarly, in the covariates, missing values can arise due to a variety of reasons, such as measurement errors, data entry errors, or non-response. Fortunately, missing data situations can be handled easily in the Bayesian framework, by adding a data augmentation step to the MCMC algorithm. 

Suppose that each assessor $j$ has ranked a subset of items $\mathcal{A}_j \subset \{A_1,\ldots,A_n\}$ consisting of $n_j \leq n$ items. The observed rank vector $\bm{R}_j,$ for $j=1,\ldots,N,$ will then include only the rankings of the items in $\mathcal{A}_j$ (and not necessarily the top-$n_j$ ones). We thus define the augmented data vectors $\Tilde{\bm{R}}_1,\ldots,\Tilde{\bm{R}}_N,$ where the ranks of the non-ranked items are latent variables,  whose posterior is estimated via data augmentation in the MCMC algorithm. This MCMC algorithm alternates between two steps: (i) sampling the augmented ranks given the current values of all other parameters in the model, and (ii) sampling all model parameters given the current values of the augmented ranks. Let $\mathcal{S}_j = \{\Tilde{\bm{R}}_j \in \mathcal{P}_n : \forall A_{i_1}, A_{i_2} \in \mathcal{A}_j \textnormal{ s.t. } R_{{i_1}j} < R_{{i_2}j}  \Rightarrow \Tilde{R}_{i_1 j} < \Tilde{R}_{i_2 j}\}$ be the set of possible augmented random vectors that comply with the original partially ranked items. This practically means that in $\Tilde{\bm{R}}_j = (\Tilde{R}_{1j},\ldots, \Tilde{R}_{nj})$ the respective ordering between items in $\mathcal{A}_j$ must be the same as in $\bm{R_j}$, whereas the augmented ranks of the items in $\mathcal{A}_j^c$ are unconstrained. 

First, given the augmented ranks, all model parameters are sampled from the posterior in \eqref{eq:bmmx_posterior} $P(\bm{z}, \{\bm{\rho}_c, \alpha_c, \tau_c \}_{c=1}^C |  \Tilde{\bm{R}}_1,\ldots,\Tilde{\bm{R}}_N, \bm{x})$ as explained in Section \ref{sec:mcmc} above. Then, given $\bm{z}$ and $\{\bm{\rho}_c, \alpha_c, \tau_c \}_{c=1}^C$, each $\Tilde{\bm{R}}_j$ is updated separately (as the augmented rank vectors are conditionally independent from each other given the cluster label associated to the assessors and the other model parameters). A proposed $\Tilde{\bm{R}}_j'$ is sampled uniformly in $\mathcal{S}_j$, so as to respect the mutual orderings observed in the data. The proposed $\Tilde{\bm{R}}_j'$ is then accepted with probability 
\begin{equation}\label{eq:accept_data_aumentation}
    \min \left\{1, \exp \left[ -\frac{\alpha_{z_j}}{n} \left( d(\Tilde{\bm{R}}_j', \bm{\rho}_{z_j}) - d(\Tilde{\bm{R}}_j, \bm{\rho}_{z_j})\right) \right] \right\}.
\end{equation}
The MCMC algorithm for the missing data case is described in Algorithm \ref{alg:bmmx_mcmc_missing} in \cref{supp:bmmx_mcmc_missing}.

In the case of incomplete covariates, one can simply ignore the missing covariate values in the computation of the similarity function. Since the covariates (via the similarity function) are only considered as prior information for the cluster labels, ignoring such missing values is equivalent to assuming no prior information from the covariates on the corresponding assessors, with a multiplying factor that only involves the cohesion function. Therefore, missing values in the covariates are automatically handled in the model by using the original BMM prior.

\section{Simulation experiments}\label{sec:simulations}
In this section, we provide a comprehensive evaluation of the performance of the Bayesian Mallows model with covariates (BMMx) and compare it against the Bayesian Mallows model (BMM). We assess the two methods based on their ability to correctly assign cluster labels to assessors, and calculate the mean posterior probability for each assessor being assigned to the correct cluster. We also conduct a sensitivity analysis to investigate the impact of different hyperparameters included in the similarity functions. 
Through these evaluations, we aim to provide a deeper understanding of the strengths and weaknesses of the BMMx and BMM methods in various scenarios.

Two datasets are needed for the simulation experiments: the rankings $\bm{R}$ and the covariates $\bm{x}$. Let us first focus on the rankings. The generation of the dataset $\bm{R}$ was done in the following way: the observed complete ranking of the $j$th assessor was sampled from a Mallows model with a fixed $\alpha$ (same for all clusters), and with a cluster-specific $\bm{\rho}_c,$ with $c=1,\ldots,C$ and $C$ fixed. Specifically, $R_j \sim \textnormal{Mallows}(\bm{\rho}_{w_j}, \alpha)$ for $j=1,\ldots,N$, and with $w_j\in\{1,\ldots,C\}$ being the true cluster label assignment. To generate the consensus rankings for the different clusters, $\bm{\rho}_c$ with $c=1,\ldots,C$, a swapping procedure was followed. The swapping procedure consisted in swapping the ranks of $s$ items at a certain footrule distance from a ``baseline'' consensus ranking ($\bm{\rho}_1,$ kept fixed) to generate $\bm{\rho}_c$ for $c=2,\ldots,C$. 
More specifically, the swapping was done as follows. We start by setting $\bm{\rho}_1 = (1,2,\ldots,n)$, and initializing $\bm{\rho}_c=\bm{\rho}_1$ for all $c=2,\ldots,C$. Two main steps were then performed. First, a set of items $\mathcal{S}_0$, $|\mathcal{S}_0|=s(C-1)$, were sampled from $\bm{\rho}_1$. Secondly, to generate a set of ``swapped'' items $\mathcal{S}_{\textnormal{swap}}$ which should all have a rank at distance $d_x$ from the ranks of the original items, we randomly added or subtracted $d_\rho$ to the ranks of all items in $\mathcal{S}_0$ to get $\mathcal{S}_{\textnormal{swap}}$. These two steps were repeated until the items in $\mathcal{S}_{\textnormal{swap}}$ fulfilled two criteria: (i) $rank(\mathcal{S}_{\textnormal{swap}}) \subset \{1,2,\ldots,n\}$, and (ii) $\mathcal{S}_{\textnormal{swap}} \notin \mathcal{S}_0$ (no items are swapped back and forth). Once fulfilled, $s$ items from $\mathcal{S}_0$ were swapped with $s$ items from $\mathcal{S}_{\textnormal{swap}}$ to get $\bm{\rho}_c$ for each $c=2,\ldots,C$.

The swapping procedure for generating the consensus rankings was designed to have a measure of ``difficulty'' of the generated clustered data: a larger $d_\rho$ (and $s$) indicates a ranking dataset where the cluster centroids are further apart, and thus easier to estimate. Conversely, a smaller $d_\rho$ indicates more similar centroids and thus a more challenging clustering estimation task.

Analogously, the covariates dataset $\bm{x}$ was generated according to a size parameter $d_x,$ which affected the respective distances between cluster means. Specifically, for $j=1,\ldots,N,$ $x_j \sim N(\mu_{v_j}, \sigma)$, with $v_j\in\{1,\ldots,C\}$ defining the covariate cluster assignment to $C$ groups, and $\mu_c = d_x(c-1)$, for $c=1,\ldots,C$. A larger $d_x$ indicates a more informative covariate dataset, as the covariates are sampled from $C$ well-separated distributions with means $\mu_c$'s further apart from one another, which should thus be more informative in the clustering estimation. On the other hand, a smaller $d_x$ indicates a less informative covariate dataset with less separated cluster-wise covariate distributions.

We computed three performance measures in order to evaluate and compare the methods. Recall that $\bm{w}=\{w_1,\ldots, w_N\}$ is the true cluster label assignment, and let $\bm{\hat{z}}=\{\hat{z}_1,\ldots, \hat{z}_N\}$ be its estimate via the MAP of $\bm{z}$. We consider the following measures of performance:
\begin{enumerate}
    \itemsep0em
    \item the proportion of assessors assigned to the correct cluster $\hat{p} = \frac{1}{N} \sum_{j=1}^N 1_{\hat{z}_j=w_j}$. 
    \item the mean posterior probability of all assessors being assigned to the correct cluster \\
    $z_{\textnormal{post}} = \frac{1}{N} \sum_{j=1}^N P(z_j=w_j |\{\bm{\rho}_c, \alpha_c, \tau_c \}_{c=1}^C,  \bm{R}, \bm{x})$.
\end{enumerate}
\subsection{$K=3$ continuous covariates}
We consider a simulation example with $n=20$ items, and $N=90$ assessors assumed to be grouped into $C=3$ clusters of equal size ($N_c = 30$ assessors in each cluster $c=1,2,3$). We here consider $K=3$ continuous covariates. The two datasets (rankings and covariates) were simulated according to the procedure explained above, with $s=3$ items swapped, and $\alpha = 5$ for all clusters (for the generation of the rankings), and with a fixed variance of $\sigma=1$ for all covariate distributions for all clusters (for the generation of the covariates).

The analysis was carried out with BMM as well as with BMMx (and its similarity function variations) over a grid of values for $d_\rho$ and $d_x$. The aim of these initial experiments was to verify that the method works as expected, and to study how adding the covariates into the analysis affects the final estimation of the clusters. We ran the two methods 10 times for each set of $d_\rho$ and $d_x$, $M=10^4$ MCMC iterations, and burn-in set to $1000$. Figure \ref{fig:sim_p_hat_scatter_3cont} displays the proportion of correctly assigned cluster labels (denoted as $\hat{p}$), over 10 runs with the BMM and BMMx in the experiment described above with $s=3$ in the upper two panels and with $s=4$ in the lower two panels. The results showed that BMMx consistently performed better than or equally to BMM. Additionally, there is a noticeable difference in performance between the two similarity measures. Specifically, our proposed goodness-of-fit similarity function performs better than the augmented similarity function for higher values of $d_\rho$. This suggests that the augmented model may not benefit from more informative covariates, as there is very little improvement in performance as $d_x$ increases. In contrast, when using the goodness-of-fit similarity the model shows a clear improvement in performance with increasing $d_x$. We also see that there is a slightly higher degree of uncertainty in the estimates for the model using the augmented similarity as compared to the other. When increasing $s$ (the number of items swapped), we see that there is an overall increase in performance for both methods, as expected.

These experiments were performed without any tuning of the various associated hyperparameters, which suggests that the ``baseline performance'' of the goodness-of-fit similarity is superior to the augmented similarity: this can be explained by the fact that the augmented similarity measure needs three hyperparameters, compared to the single hyperparameter needed for the goodness-of-fit similarity measure. We show in the remainder of the section how the performance of the two similarity measures changes with varying hyperparameters. 

\begin{figure}[!htb]
\minipage{\textwidth}
  \includegraphics[width=\linewidth]{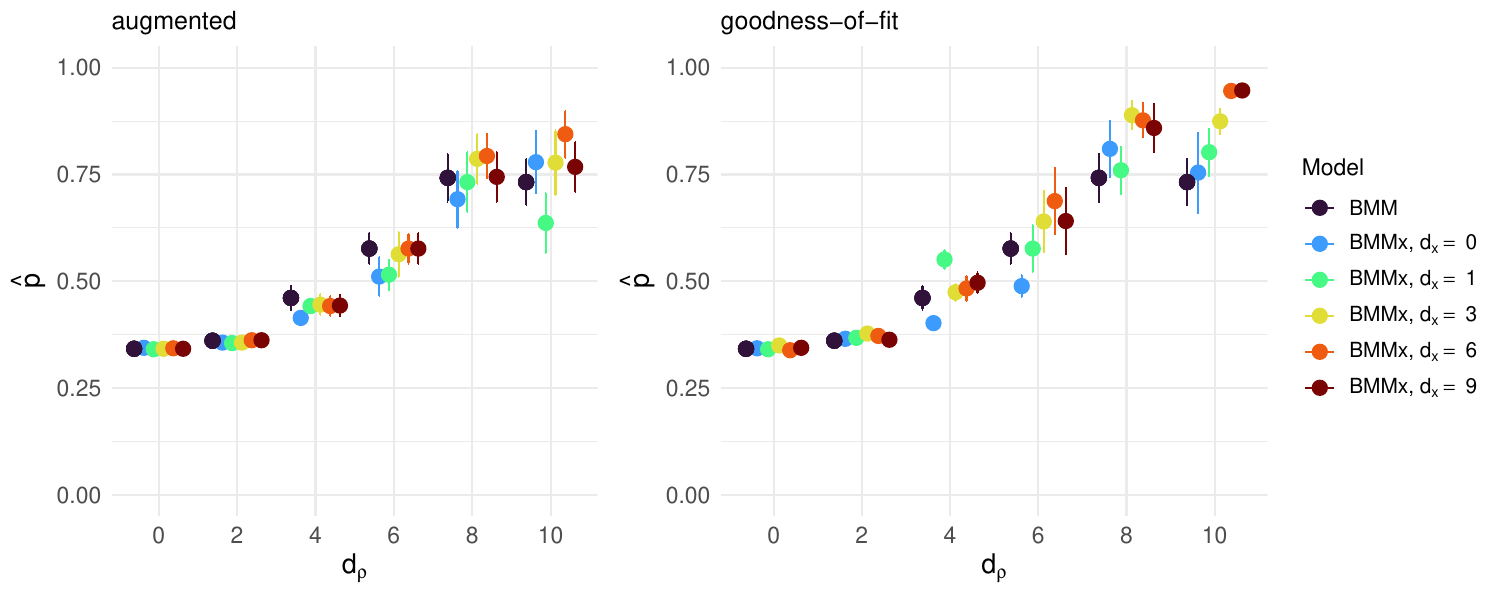}
\endminipage\hfill
\minipage{\textwidth}
  \includegraphics[width=\linewidth]{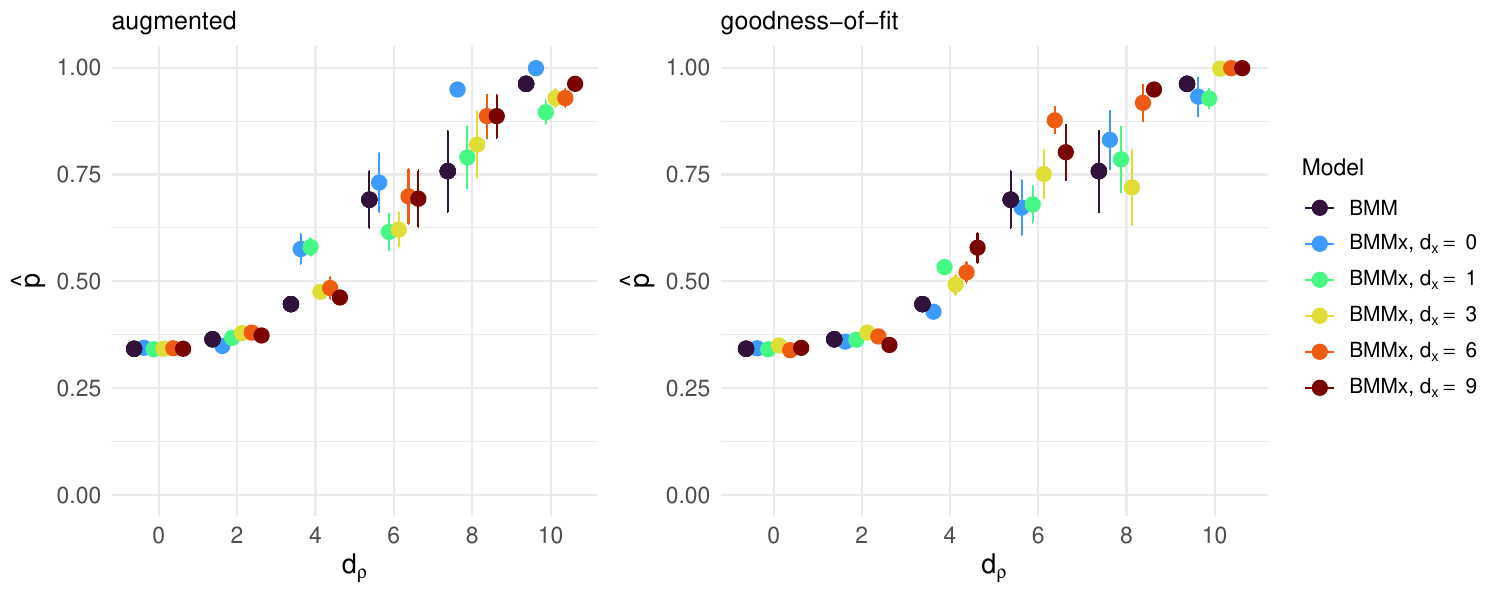}
\endminipage 
\caption{Scatterplot of $\hat{p}$ (y-axis) with respect to $d_\rho$ (x-axis), with variance bars over 10 runs, for BMM (black) and BMMx using covariates with varying $d_x$ (colors, as shown in legend), with $s=3$ (upper panels) and $s=4$ (lower panels), for the augmented similarity function (left) and the goodness-of-fit similarity function (right). Hyperparameters for BMMx: $c_1 = 0.5$, $c_2=10$, $\mu_0 = 1$ (augmented similarity function) and $\theta=1$ (goodness-of-fit similarity function).}
\label{fig:sim_p_hat_scatter_3cont}
\end{figure}

A sensitivity study was then carried out to verify the effect of tuning the hyperparameters introduced in the two similarity functions. We first focused on studying the effect of $c_1$ on the model when using the augmented similarity function, and of $\theta$ when using the goodness-of-fit similarity function, in the case of continuous covariates. The data was generated according to the data-generating process described at the beginning of the section, and the models were tested over a grid of values for $c_1$ and $\theta$ that allowed exploring the performance well (both in terms of range and order of magnitude). Results were collected after running the method on 10 simulated datasets, using $M=10^4$ MCMC iterations and burn-in set to $10^3$. 

Figure \ref{fig:sim_tuning_c1} displays $z_{\textnormal{post}}$ for varying values of $d_x$ and $c_1$ with the augmented similarity function for $d_\rho=6$ (left panel) and $d_\rho=10$ (right panel). The results showed that a higher value for $c_1$ seems to be the optimal choice in the more difficult clustering case ($d_\rho=6$), while in the easier clustering case there seems to be no clear tuning strategy, although $c_1=10$ is a better choice in many of the cases (but not always). A higher $c_1$ corresponds to a less informative auxiliary prior for the augmented distribution of the covariates, suggesting that in this model being less specific is generally better, regardless of how informative the covariate dataset is. It is worth noting that BMM performs as well as or better than BMMx in most of the cases in the current setup. It is possible that a more exhaustive sensitivity study tailored to studying the tuning of $c_1$ would help to better understand the performance of this version of the BMMx model, however we consider this to be out of the scope of this paper, and conclude that the use of the alternative goodness-of-fit similarity function is advisable in general.

\begin{figure}[!htb]
\minipage{\textwidth}
\centering
  \includegraphics[width=\linewidth]{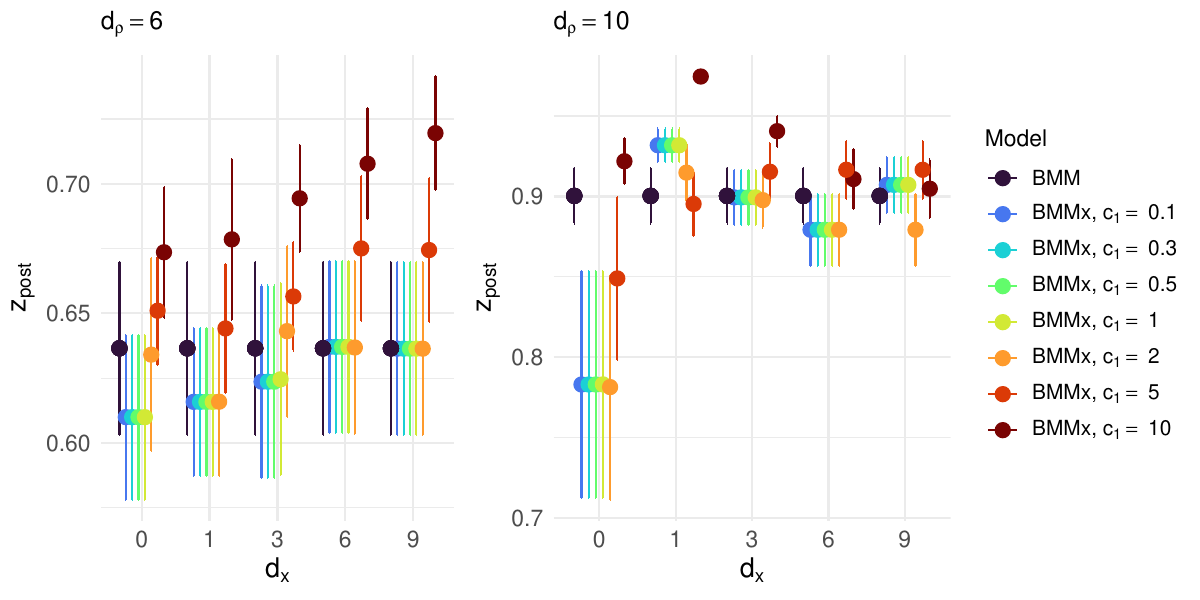}
\endminipage 
\caption{Scatterplot of $z_{\textnormal{post}}$ (y-axis) with respect to $d_x$ (x-axis), with variance bars over 10 runs, for BMM (black) and BMMx with varying $c_1$ (colors, as shown in legend), with the augmented similarity function. $d_{\rho} = 6$ (left) and $d_{\rho} = 10$ (right), $s=3$.}
\label{fig:sim_tuning_c1}
\end{figure}

Similarly, Figure \ref{fig:sim_tuning_theta} displays $z_{\textnormal{post}}$ for varying values of $d_x$ and $\theta$ with the goodness-of-fit similarity function for $d_\rho=6$ (left panel) and $d_\rho=10$ (right panel). Here the results showed an indication that mid-range values of $\theta$ were performing best on average. For higher values of $d_x$ (more informative covariates), there is an indication that higher values of $\theta$ would be better, while for lower values of $d_x$ (less informative covariates), mid-range values of $\theta$ seem to be optimal. This suggests that the hyperparameter is attempting to balance the information content from covariates with the actual data. Indeed, as one would intuitively expect, when the covariates are less informative, making their contribution weigh less (i.e. mid-range $\theta$-values) is optimal, while conversely when the covariates are more informative, weighting them more (i.e. higher $\theta$-values) is preferred. As a tuning strategy, we would therefore suggest to set $\theta$ reasonably large. Overall, BMMx with the proposed goodness-of-fit similarity function always performs as well as or better than BMM, which also shows the superiority of this choice as compared to the augmented similarity.

\begin{figure}[!htb]
\minipage{\textwidth}
\centering
  \includegraphics[width=\linewidth]{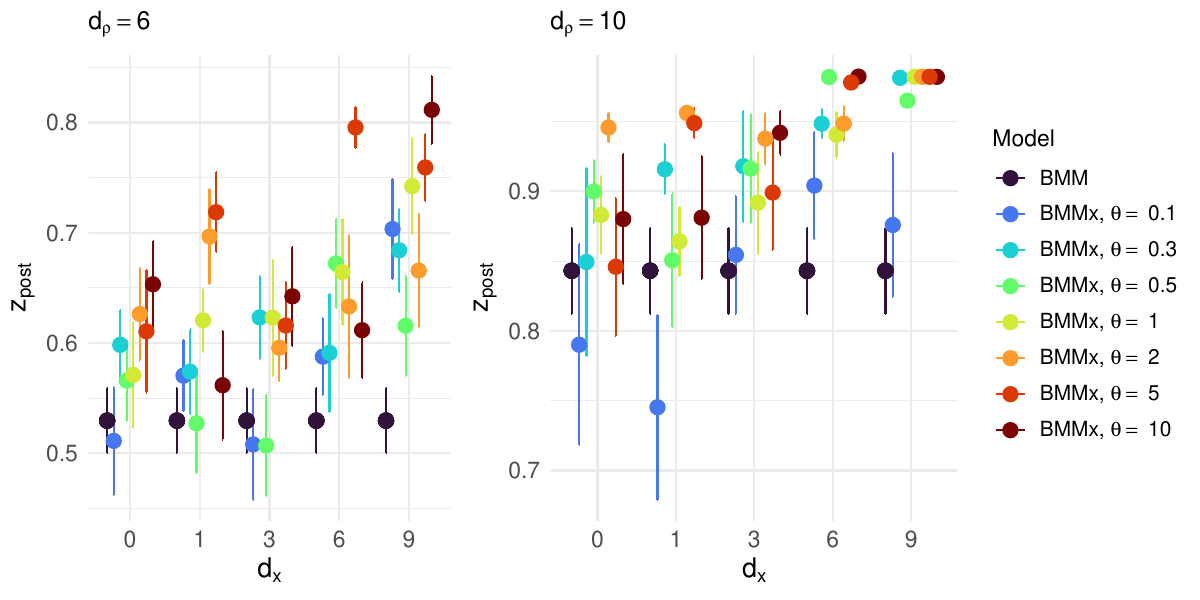}
\endminipage 
\caption{Scatterplot of $z_{\textnormal{post}}$ (y-axis) with respect to $d_x$ (x-axis), with variance bars over 10 runs, for BMM (black) and BMMx with varying $\theta$ (colors, as shown in legend), with the goodness-of-fit similarity function. $d_{\rho} = 6$ (left) and $d_{\rho} = 10$ (right), $s=3$.}
\label{fig:sim_tuning_theta}
\end{figure}


\subsection{$K=1$ categorical covariate}

We now consider only $K=1$ categorical covariate, generated in two scenarios: one where $\bm{x}$ is distinct categories, and one where $\bm{x}$ is randomly distributed categories (essentially just noise). In the first scenario, the covariate dataset was generated so that the covariate describes a clear subgrouping of the data into $B=3$ categories, which are coherent to the true clustering of the assessors. In the second scenario, the covariate dataset was generated from a random sample from a categorical distribution with $B=2$ categories. Similarly to the experiments done for the continuous covariates, we here run the BMM and BMMx over a grid of values for $d_\rho$ and $d_x$, and for BMMx we also varied the hyperparameters $\varphi$ (augmented similarity) and $\gamma$ (goodness-of-fit similarity). Figure \ref{fig:sim_tuning_gamma} displays scatterplots of $\hat{p}$ for the two data cases while also varying $\gamma$. The results from the relatively small tuning study show mainly two things: first, in the case of the goodness-of-fit similarity measure, increasing $\gamma$ seems to improve performance when the covariate is informative, while it does the opposite for the uninformative covariate. When $d_x$ is mid-sized (i.e. 4-6), results are quite similar for the two covariate cases. Moreover, BMMx and BMM perform quite comparably. Secondly, in the case of the augmented model, BMMx performs significantly worse than BMM, especially in the case of the uninformative case (ref. Figure \ref{fig:sim_tuning_psi}). In this case, there is no evident tuning strategy, except to possibly keep $\psi$ low. 

\begin{figure}[!htb]
\minipage{\textwidth}
\centering
  \includegraphics[width=\linewidth]{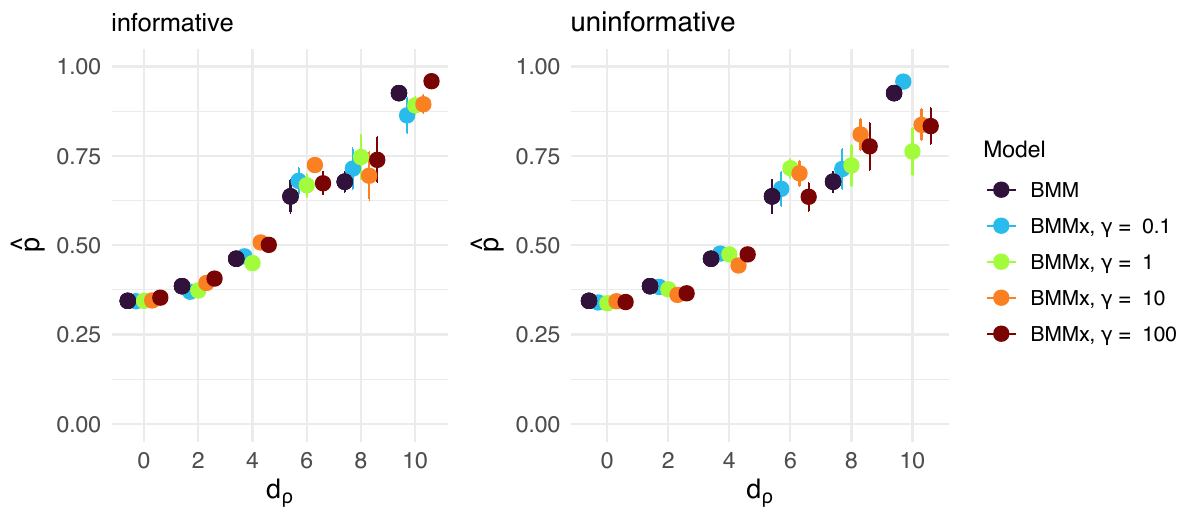}
\endminipage 
\caption{Scatterplot of $\hat{p}$ with variance bars over 10 runs for BMM and BMMx for varying $d_x$ on the x-axis and varying $\gamma$ (color bars), with the goodness-of-fit similarity function. $K=1$ categorical covariate for two levels of informativeness.}
\label{fig:sim_tuning_gamma}
\end{figure}

\begin{figure}[!htb]
\minipage{\textwidth}
\centering
  \includegraphics[width=\linewidth]{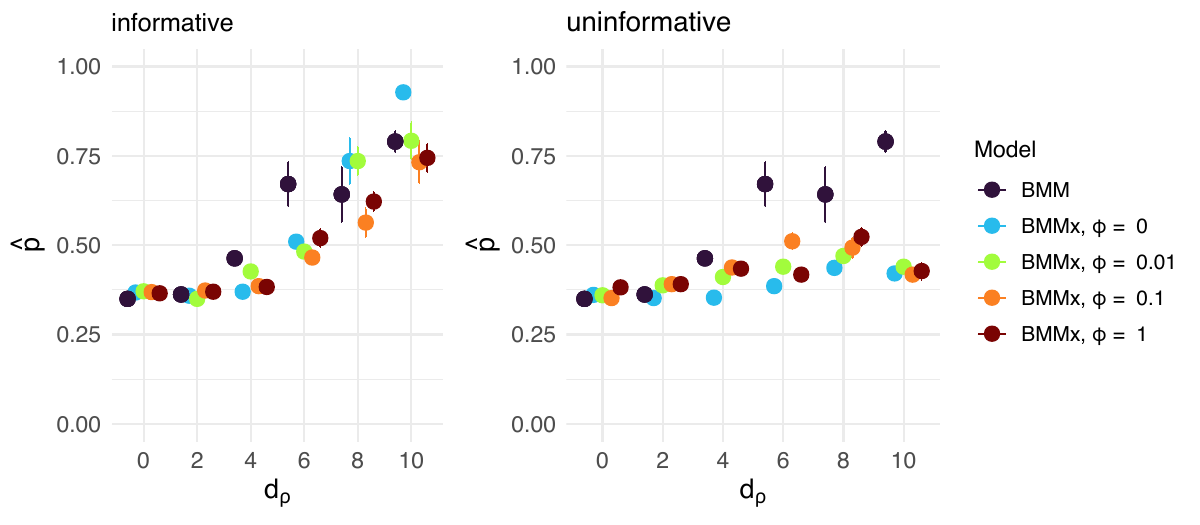}
\endminipage 
\caption{Scatterplot of $\hat{p}$ with variance bars over 10 runs for BMM and BMMx for varying $d_x$ on the x-axis and varying $\phi$ (color bars), with the augmented similarity function. $K=1$ categorical covariates, for two levels of informativeness.}
\label{fig:sim_tuning_psi}
\end{figure}

\section{Case studies}\label{sec:case_study}
This section contains data experiments from two different case studies, demonstrating the versatility of BMMx and its performance across different data structures and applications. Section \ref{sec:case_study_sushi} is focused on a benchmark dataset of sushi preferences from Japan \cite{kamishima2003}, while Section \ref{sec:case_study_breast_cancer} focuses on an RNAseq dataset from breast cancer patients (TCGA Data Portal). 

\subsection{Sushi data}\label{sec:case_study_sushi}

We illustrate clustering based on full rankings using the benchmark data set of sushi preferences collected across Japan \cite{kamishima2003}. $N = 5000$ people were interviewed, each giving a complete ranking of $n = 10$ sushi variants. Additionally, a set of covariates were reported, such as age, gender, and several covariates describing geographical regions the people had lived for parts of their lives (covariates are listed in Table \ref{tab:sushi_covariates}). Cultural differences among Japanese regions influence food preferences, so we expect the assessors to be clustered according to different shared consensus rankings. Additionally, we expect the geographical covariates to possibly be informative covariates for BMMx. 

\begin{table}[!htb]
\centering
\begin{tabular}{l|l|l}
\hline
\textbf{Label} & \textbf{Description} & \textbf{Type} \\ 
   \hline
V2 & gender, 0:male 1:female & categorical \\
V3 & age, 0:15-19  1:20-29   2:30-39   3:40-49   4:50-59    5:60- & categorical \\
V4 & total time to fill questionnaire form & continuous \\
V6 & regional ID most longly lived until 15 years old & categorical \\
V7 & east/west ID (0: east, 1: west) most longly lived until 15 years old & categorical \\
V9 & regional ID at which you currently live & categorical \\
V10 & east/west ID at which you currently live & categorical \\
  \hline
\end{tabular}
\caption{\label{tab:sushi_covariates} Description of the covariates included in the sushi dataset.}
\end{table}

For running BMMx, we select a subset of $N=1000$ assessors out of the total 5000, making sure that the subset is a good representation of the complete dataset (see Figure \ref{supp:sushi_cov_all} and Figure \ref{supp:sushi_cov_subset} in the supplementary material for descriptive plots of the covariates for the full set of assessors and the subset). The subsampling was done due to scalability issues in BMMx when $N$ is large, which is not uncommon for Product Partition models \cite{page_quintana2018, yang2020}. We ran BMM and BMMx (only using the goodness-of-fit similarity function with $\theta=1$ for all continuous covariates and $\gamma=1$ for all categorical covariates) for a range of possible number of clusters $C \in \{1, \ldots, 10\}$, with $M=10^4$ MCMC iterations, and discarded the first 1,000 iterations as burnin. 

For each $C$ we used the MCMC samples to compute the posterior footrule distance between $\bm{\rho}_c$ and the ranking of each assessor assigned to that cluster, and then summed over all assessors and cluster centers: $\sum_{c=1}^C \sum_{j:z_j=c} d(\bm{R}_j, \bm{\rho}_c)$. The posterior distribution of this quantity was then used for choosing the appropriate value for $C$, see Figure \ref{fig:sushi_elbow}. We found an elbow for BMMx at $C=6$, which was then used to further inspect the results.

\begin{figure}[!htb]
\minipage{\textwidth}
\centering
\includegraphics[width=0.8\linewidth, height=6cm, keepaspectratio]{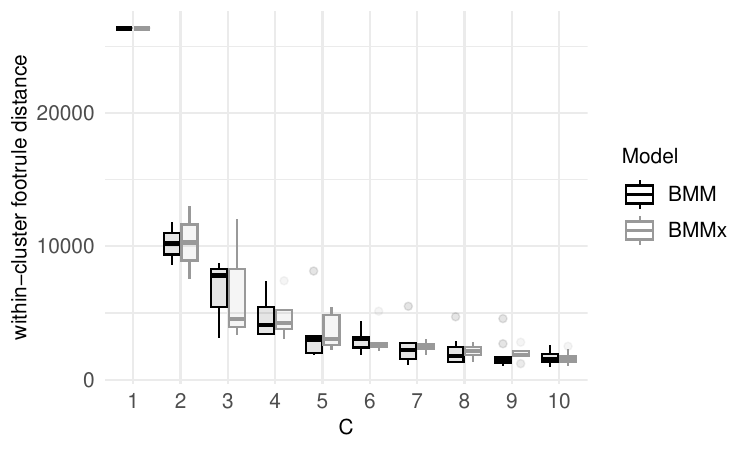}
\endminipage 
\caption{Results of the Sushi experiment. Boxplots of the posterior distributions of the within-cluster sum of footrule distances of assessors' ranks from the corresponding cluster consensus for different choices of $C$ for BMM and BMMx.}
\label{fig:sushi_elbow}
\end{figure}

We investigate the stability of the clustering for BMMx in Figure \ref{fig:sushi_cluster_assignment_pp_bmmx}, which shows the heatplot of the posterior probabilities, for all 1000 assessors (on the x-axis), of being assigned to each of the 6 clusters: most of these individual probabilities were concentrated on some particular preferred value of $c$ among the six possibilities, indicating a reasonably stable behavior in the cluster assignments. For BMM the elbow seemed to be at $C=5$, and a heatplot of the posterior probabilities for all 1000 assessors being assigned to each of the 5 clusters is displayed in Figure \ref{fig:sushi_cluster_assignment_pp_bmm}. If comparing the results from the two methods with $C=6$, the two methods seem to agree on a few of the clusters, as evident from the comparison of cluster assignments reported in Table \ref{tab:sushi_contingency_C6}. 

The MCMC algorithms of both methods showed low acceptance probabilities for $\bm{\rho}$ and $\alpha$ (2,1$\%$ and 4,5$\%$, respectively, for BMMx), but both MCMC chains showed clear convergence, especially for $\alpha$ and $\bm{\tau}$ (see Figures \ref{supp:sushi_alpha_convergence} and 
\ref{supp:sushi_cluster_convergence} in the supplementary material). The low acceptance rate can be due to the ``stickiness'' already characterizing the chains when using the original BMM, where the clustering procedure struggled to accept new proposals when $n << N$. This issue can potentially be mitigated with an adaptive proposal strategy \cite{andrieu2008} for the MCMC or via posterior tempering \cite{marinari1992, geyer_thompson1995}, but this is out of the scope of this paper.

\begin{figure}[!htb]
\minipage{\textwidth}
\centering
  \includegraphics[width=\linewidth]{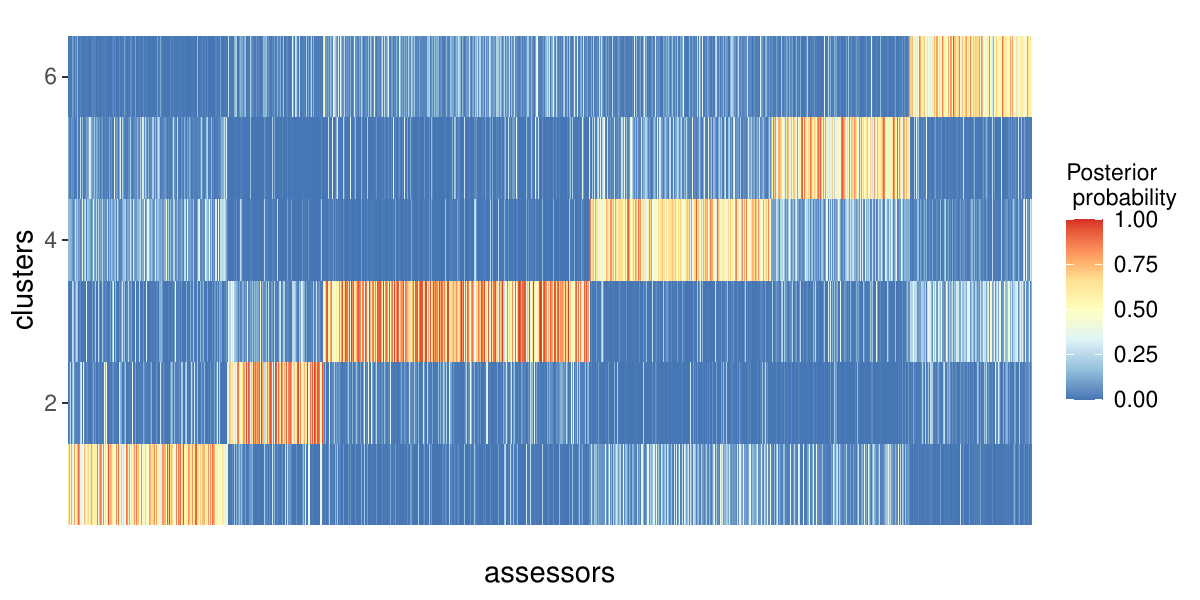}
\endminipage 
\caption{Results of the Sushi experiment. Heatplot of posterior probabilities for all 1000 assessors (on the x-axis) of being assigned to each cluster ($c = 1, \ldots, 6$ from bottom to top), for BMMx.}
\label{fig:sushi_cluster_assignment_pp_bmmx}
\end{figure}

\begin{figure}[!htb]
\minipage{\textwidth}
\centering
  \includegraphics[width=\linewidth]{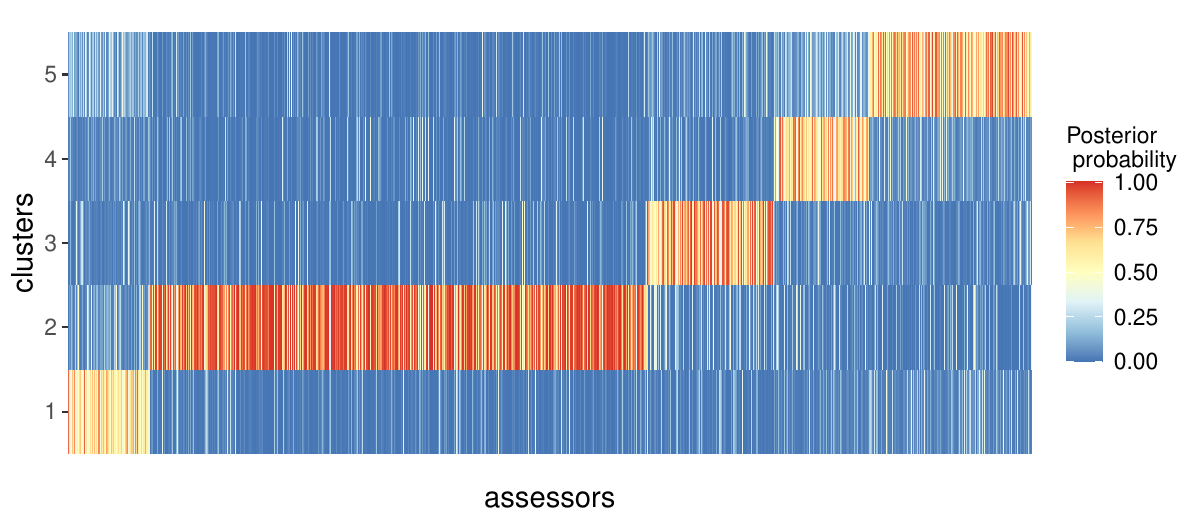}
\endminipage 
\caption{Results of the Sushi experiment. Heatplot of posterior probabilities for all 1000 assessors (on the x-axis) of being assigned to each cluster ($c = 1, \ldots, 5$ from bottom to top), for BMM.}
\label{fig:sushi_cluster_assignment_pp_bmm}
\end{figure}

\begin{table}[ht]
\centering
\begin{tabular}{r|rrrrrr}

  & \multicolumn{6}{c}{BMM}\\
  \hline
BMMx & 1 & 2 & 3 & 4 & 5 & 6 \\ 
  \hline
1 & 108 &  18 &   6 & 174 & 690 &   0 \\ 
  2 & 552 &  12 &   0 &   6 &  18 &   6 \\ 
  3 &  60 & 978 & 444 &  42 &  24 & 114 \\ 
  4 &   0 &   0 &   0 & 240 & 858 &  24 \\ 
  5 &   0 &  12 &   0 & 780 &  54 &  18 \\ 
  6 &  24 &  42 &  78 &  24 &  36 & 558 \\ 
   \hline
\end{tabular}
\caption{Results of the Sushi experiment. Contingency table for BMM and BMMx ($C=6$). }
\label{tab:sushi_contingency_C6}
\end{table}

In order to inspect the distribution of covariates in the different clusters, we plot all $K=7$ covariates according to the estimated cluster assignment (see the supplementary material for all plots: \ref{supp:sushi_cluster_assignment_V2}, \ref{supp:sushi_cluster_assignment_V3}, \ref{supp:sushi_cluster_assignment_V4}, \ref{supp:sushi_cluster_assignment_V7}, \ref{supp:sushi_cluster_assignment_V10}). We here focus on covariates $V6$ and $V9$ only, as they describe the regions of Japan in which the assessors were children or live currently, respectively, and therefore we assume these could be the most informative covariates for the clustering. The proportion of individuals from each cluster assigned by BMMx in each level of covariate $V6$ and $V9$ is displayed in Figure \ref{fig:sushi_V6_V9_unstacked_barplot_bmmx}, with the regions on the x-axis ordered from north to south in Japan. First of all, there is overall not a drastic change in the clustering from V6 to V9, suggesting that the preferences the assessors had as children remain more or less the same when they are adults. Figure \ref{fig:sushi_V6_V9_unstacked_barplot_bmmx}  also shows that clusters 5 and 6 seem to be slightly over-represented in northern-central regions, while cluster 1 is particularly strong in Okinawa. The posterior distribution of the cluster consensus in each of these groups can be inspected to further understand food preferences in relation to regional characteristics.


\begin{figure}[!htb]
\minipage{\textwidth}
\centering
  \includegraphics[width=\linewidth]{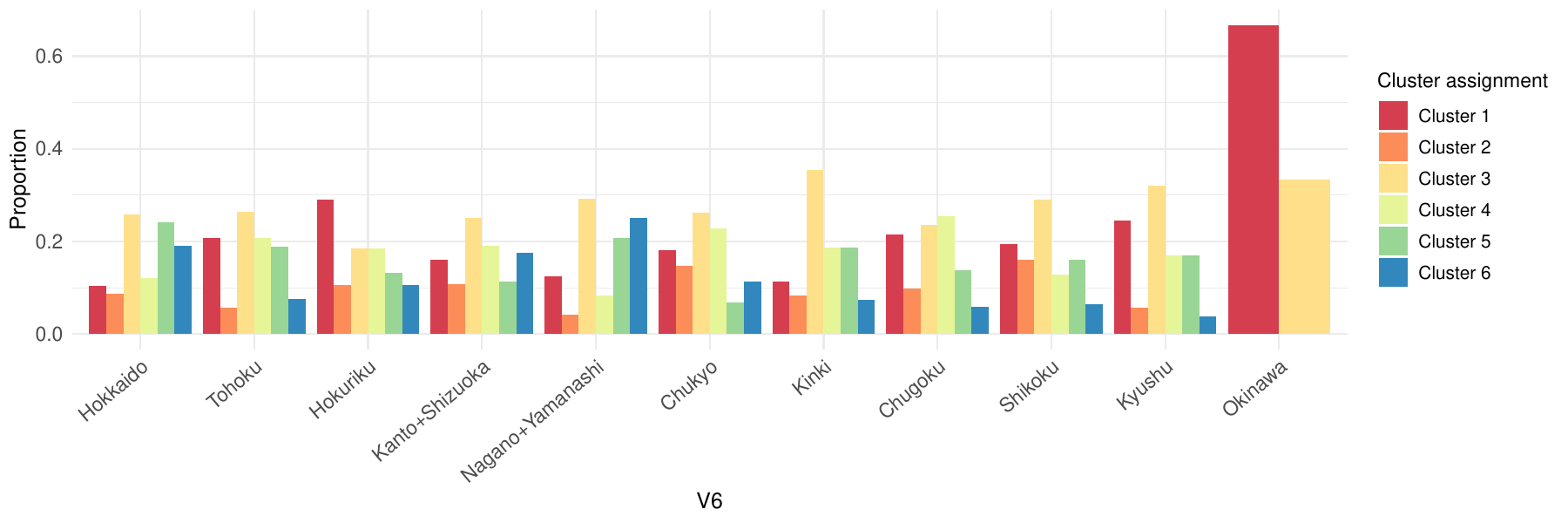}
\endminipage 
\vspace{0.1cm}
\minipage{\textwidth}
\centering
  \includegraphics[width=\linewidth]{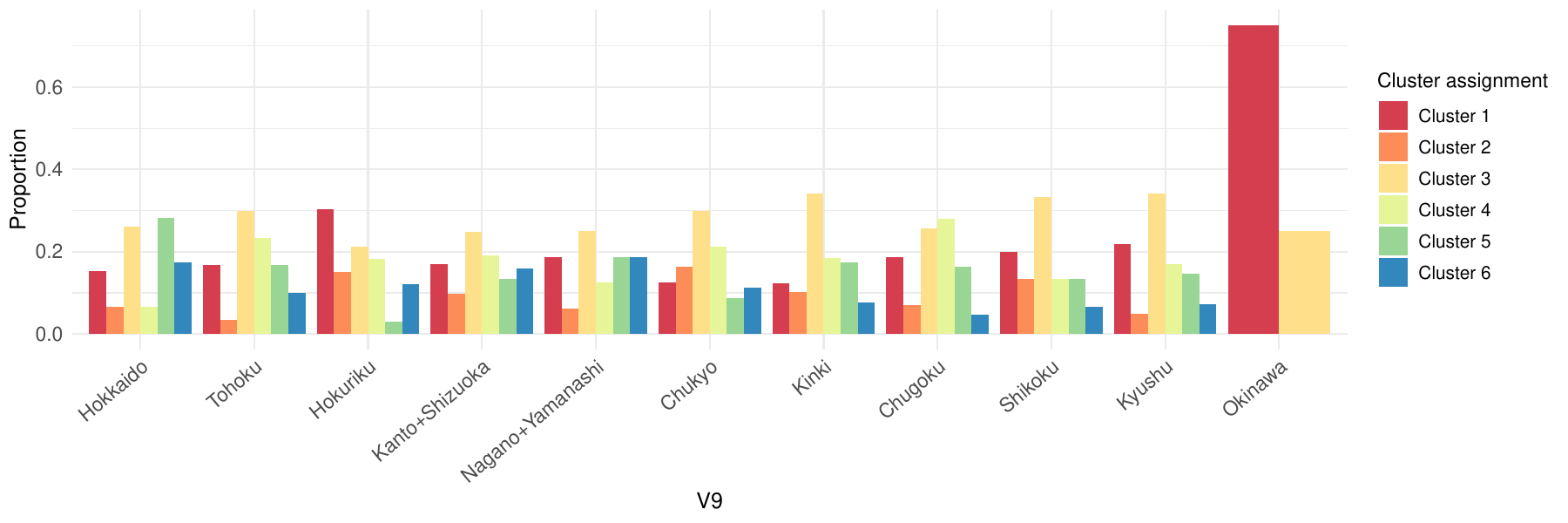}
\endminipage 
\caption{Results of the Sushi experiment. Distribution of the proportion of categories of covariate V6 (top) and covariate V9 (bottom) for the assessors belonging to the different clusters detected by BMMx, when $C=6$. Regions (x-axis) are ordered from north to south in Japan.}
\label{fig:sushi_V6_V9_unstacked_barplot_bmmx}
\end{figure}

\subsection{Breast cancer data}\label{sec:case_study_breast_cancer}
In this section we describe the application of BMMx to RNAseq data from breast cancer patients. Unsupervised learning techniques such as BMMx can provide valuable insights into omics data by facilitating the exploration of subpopulations within the dataset. For instance, in the context of cancer research, unsupervised learning approaches can effectively cluster patients with a particular cancer type into distinct omics-related subtypes. Specifically, the aim of the analysis presented is to identify subgroups of breast cancer patients showing similar gene expression patterns, to gain a better understanding of the molecular drivers of the disease specific to the different subtypes. 

The RNAseq bulk data and clinical data from breast cancer patients were available at the TCGA Data Portal\footnote{https://tcga-data.nci.nih.gov/tcga/}, which provides a large collection of data sets for different cancer types, including both clinical information on the patients, and multi-omics molecular data. The data was downloaded using the R package \texttt{TCGAbiolinks} \cite{mounir2019}. Preprocessing of the RNAseq data included normalization of the RNA count data with DESeq2 \cite{deseq2} followed by selection of the PAM50 genes \cite{sorlie2001, parker2009}, which should represent a highly informative group of genes for the purpose of breast cancer subtyping. We thus obtained a dataset including the RNAseq expression measurements for $n=49$ genes (gene ESR1 was omitted, and instead used as a covariate) and for $N=1047$ patients. In addition, we obtained clinical data where 6 clinical covariates were measured: age, ethnicity, PAM50 subtype (the most accepted partition of breast cancer patients into subtypes), treatments, time, status (see Table \ref{tab:brca_covariates} for further details on the covariates used in the analysis). When running the analysis with BMMx, we omitted the PAM50 subtype from the set of covariates, for using it instead after the analysis in the interpretation of the estimated groups. We also included four additional continuous covariates in our analysis: the RNAseq measurements for the genes ESR1 \cite{holst2007estrogen}, BRCA1, BRCA2 \cite{mavaddat2013cancer} and TP53 \cite{borresen2003tp53, walsh2006spectrum}, which are well-studied genes in breast cancer. We thus ended up with a set of $K=9$ covariates. Before running the analysis, we converted the patients' RNAseq measurements to ranks by ranking each patient data vector from $1$ to $n$ according to the gene expression value (i.e., the highest gene expression value for a given patient gets a 1, and the lowest value gets $n$). The analysis was run with $M=2.5\cdot10^4$ MCMC iterations, with 2,500 iterations discarded as burn-in, which showed to be more than enough for convergence (see Figures \ref{supp:BRCA_alpha_convergence}, \ref{supp:BRCA_rho_convergence} and \ref{supp:BRCA_cluster_convergence} in the supplementary material for convergence plots of $\alpha$, $\bm{\rho}$ and $\bm{\tau}$, respectively). As happened in the sushi case study, acceptance probabilities are low for both BMMx (5$\%$ for $\bm{\rho}$ and 1.6$\%$ for $\alpha$), however slightly higher in this experiment as compared to the sushi, since $n$ is now larger. Similarly, the acceptance probabilities for BMM are 4,5$\%$ for $\bm{\rho}$ and 1.6$\%$ for $\alpha$.

\begin{table}[!htb]
\centering
\begin{tabular}{l|l|l}
\hline
\textbf{Label} & \textbf{Description} & \textbf{Type} \\ 
   \hline
age &  age at diagnosis & continuous \\
ethnicity & "not hispanic or latino", "hispanic or latino", "not reported" & categorical \\
treatments & yes/no & categorical \\
time & observed time since diagnosis & continuous \\
status & dead/alive & categorical \\
ESR1 & RNAseq measurement & continuous \\
BRCA1 & RNAseq measurement & continuous \\
BRCA2 & RNAseq measurement & continuous \\
TP53 & RNAseq measurement & continuous \\
  \hline
\end{tabular}
\caption{\label{tab:brca_covariates} Description of the covariates used in the breast cancer data experiment.}
\end{table}

We ran BMM and BMMx (only with the goodness-of-fit similarity function, with $\theta=1$ for all continuous covariates and $\gamma=0.5$ for all categorical covariates) for a range of possible number of clusters $C \in \{1, \ldots, 10\}$, resulting in the elbow-plot shown in Figure \ref{fig:BRCA_elbow}. The figure shows an elbow at $C=4$, which we examined further for both methods. The two methods yield different cluster estimates, with the highest agreement in clusters 3 and 4,  as evident in the contingency table in Table \ref{tab:BRCA_contingency_C4}.


\begin{figure}[!htb]
\minipage{\textwidth}
\centering
  \includegraphics[width=0.8\linewidth, height=6cm, keepaspectratio]{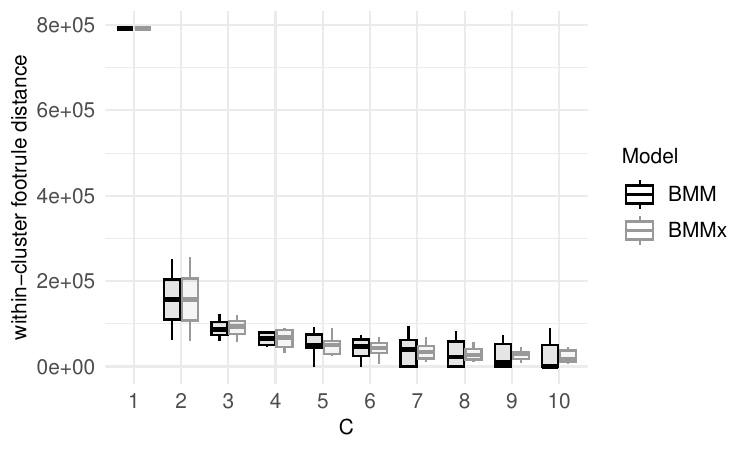}
\endminipage 
\caption{Results of the breast cancer data experiment. Boxplots of the posterior distributions of the within-cluster sum of footrule distances of assessors' ranks from the corresponding cluster consensus for different choices of $C$ for BMM and BMMx.}
\label{fig:BRCA_elbow}
\end{figure}

\begin{table}[!htb]
\centering
\begin{tabular}{crrrr}
& \multicolumn{4}{c}{BMM}\\
  \hline
BMMx & 1 & 2 & 3 & 4 \\ 
  \hline
1 &  1228 & 220 &   0 &   4 \\  
  2 &  0 & 424 &  12 &   0 \\
  3 &  0 &   0 & 764 &   0 \\  
  4 &   40 &   0 &   8 & 1488 \\ 
   \hline
\end{tabular}
\caption{Results of the beast cancer data experiment. Contingency table for BMM and BMMx ($C=4$).}
\label{tab:BRCA_contingency_C4}
\end{table}

The robustness of the clustering results was assessed by analyzing the stability of the cluster assignments. Figure \ref{fig:BRCA_cluster_assignment_pp} shows the heatplot of the posterior probabilities of all 1047 assessors (on the x-axis) being assigned to each of the 4 clusters, with BMMx and BMM. It is observed that a significant portion of these probabilities is concentrated around specific values of $c$ among the four possible clusters. This concentration indicates a consistent and stable behavior in the cluster assignments, reinforcing the reliability of the BMMx algorithm in capturing the underlying patterns in the data.

\begin{figure}[!htb]
\minipage{\textwidth}
\centering
  \includegraphics[width=\linewidth]{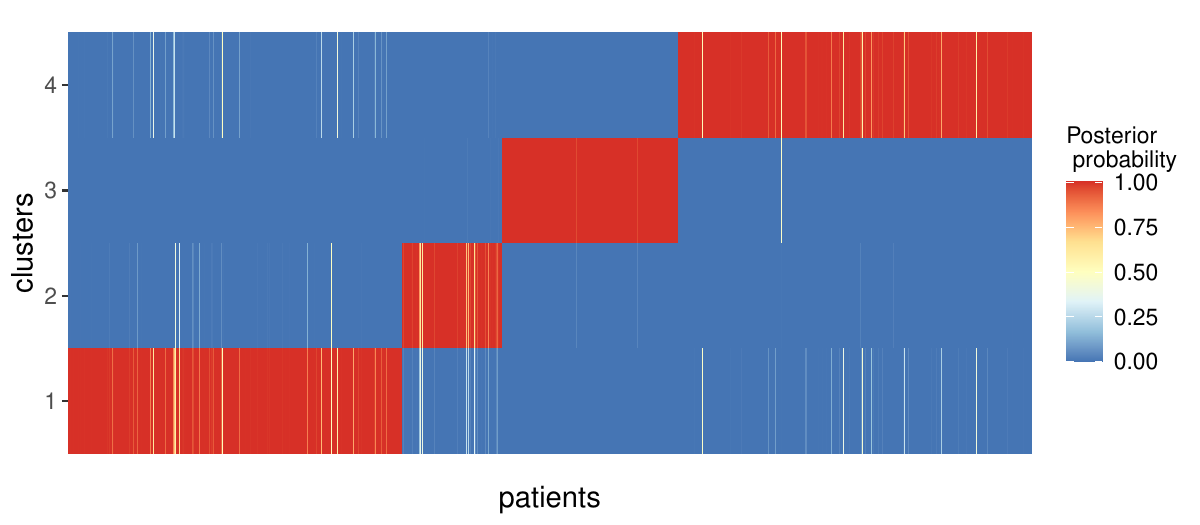}
\endminipage 
\vspace{0.1cm}
\minipage{\textwidth}
\centering
  \includegraphics[width=\linewidth]{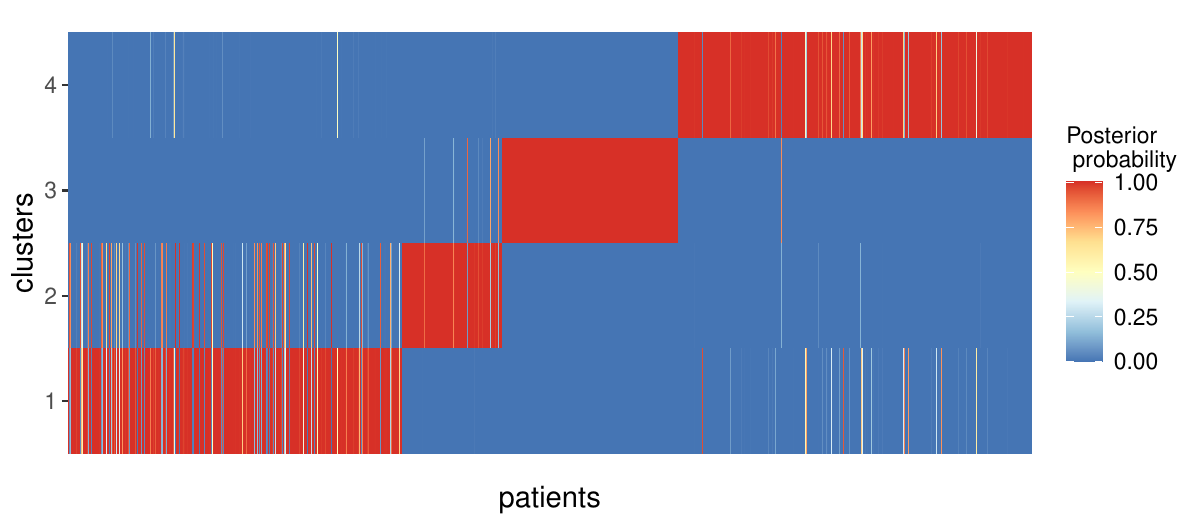}
\endminipage 
\caption{Results of the breast cancer data experiment. Heatplot of posterior probabilities for all $N=1047$ assessors (on the x-axis) of being assigned to each cluster ($c = 1, \ldots, 4$ from bottom to top), for BMMx (top) and BMM (bottom). }
\label{fig:BRCA_cluster_assignment_pp}
\end{figure}

To visualize the covariate distribution across the clusters, we generate plots for all clinical covariates based on the estimated cluster assignments (see the supplementary material for all plots: 
\ref{supp:BRCA_cluster_assignment_age}
\ref{supp:BRCA_cluster_assignment_ethnicity}, \ref{supp:BRCA_cluster_assignment_status}, \ref{supp:BRCA_cluster_assignment_time}, \ref{supp:BRCA_cluster_assignment_treatments}). Even more interestingly, we can plot the PAM50 subtypes according to the cluster assignment, and see if there are any clear differences in the distribution of patient subtypes among the different clusters. There are four widely accepted intrinsic breast cancer subtypes: Luminal A (LumA), Luminal B (LumB), Her2-enriched (Her2) and basal-like (Basal). The originally defined normal-like (Normal) breast cancer subtype is now less frequently used \cite{weigelt2010, elloumi2011, bastien2012}, as it is less characterized and homogeneous, and often associated to the Luminal breast cancer subtypes \cite{rouzier2005breast}. We see in Figure \ref{fig:BRCA_pam50} that patients with the subtype Basal are predominantly assigned to cluster 3, and Her2 to cluster 2. LumA is split among cluster 1 and 4, while LumB mainly appears in cluster 1. Finally, the normal-like cancers are mainly appearing in cluster 4, but also spread between cluster 2 and 3. It is not uncommon to group the two luminal breast cancers together \cite{aure_vitelli2017}. Comparing the results of the clustering in BMMx and BMM, BMMx had a misclassification rate of 0.246 with respect to the PAM50 subtypes, compared to BMM with 0.293 (see the contingency tables in Table \ref{tab:BRCA_confusion_PAM50}). Overall, BMMx performs better in clustering LumB patients, as compared to BMM. BMMx also estimates a better clustering of the normal-like patients. On the other hand, BMM is able to cluster the Her2 subtypes slightly better than BMMx.

\begin{figure}[!htb]
\minipage{0.5\textwidth}
\centering
  \includegraphics[width=\linewidth, height=7cm, keepaspectratio]{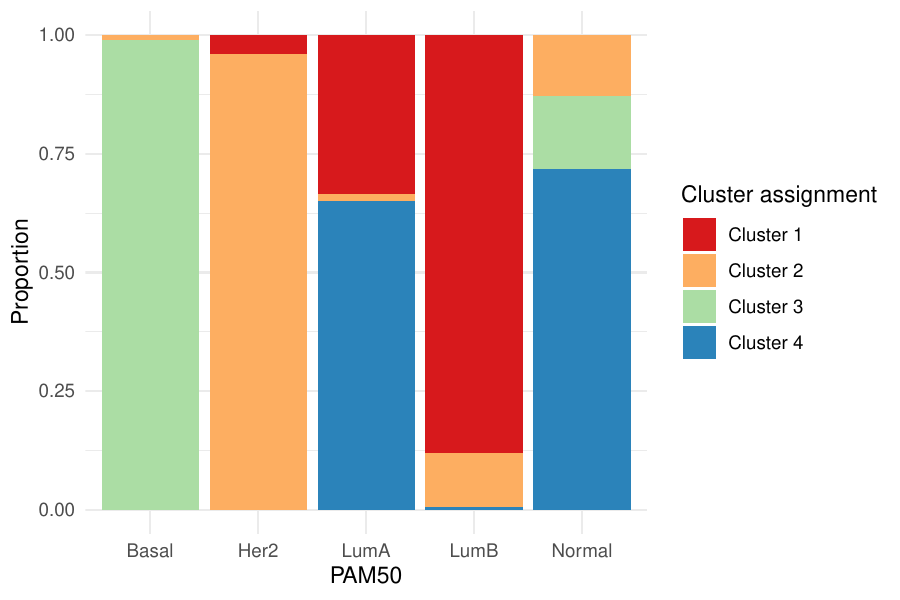}
\endminipage 
\minipage{0.5\textwidth}
\centering
  \includegraphics[width=\linewidth, height=7cm, keepaspectratio]{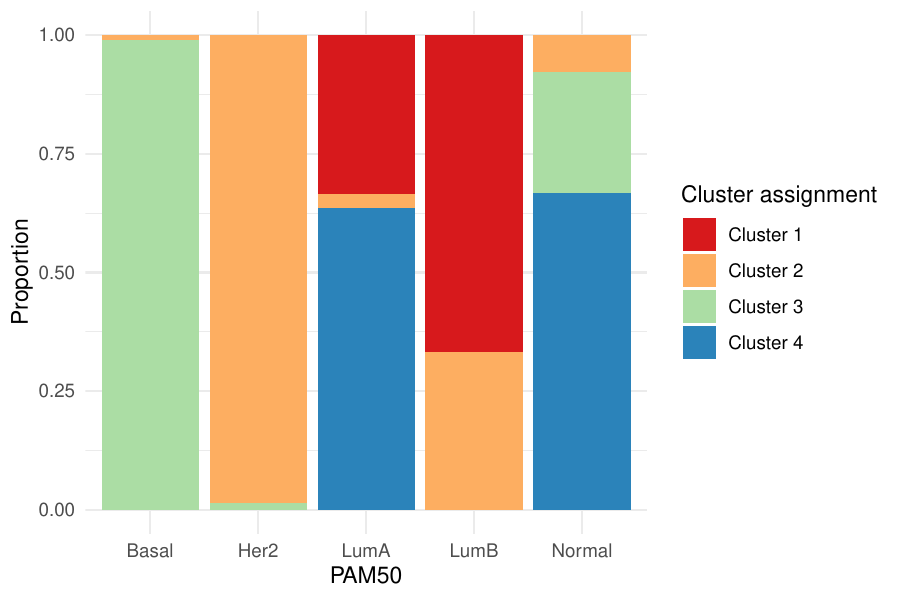}
\endminipage 
\caption{Results of the breast cancer data experiment. Distribution of PAM50 subtypes according to cluster assignment from BMMx (left) and BMM (right) with $C=4$.}
\label{fig:BRCA_pam50}
\end{figure}

\begin{table}[!htb]
\centering
\begin{tabular}{rrrrrr|rrrrr}
  \hline
  & \multicolumn{5}{c|}{BMMx} & \multicolumn{5}{c}{BMM}\\ 
  \hline
 & Basal & Her2 & LumA & LumB & Normal & Basal & Her2 & LumA & LumB & Normal \\  
  \hline
1 &   0 &   3 & 182 & 178 &   0 &    0 &   0 & 182 & 135 &   0 \\  
  2 &   2 &  71 &   8 &  23 &   5 &    2 &  73 &  16 &  67 &   3 \\  
  3 & 185 &   0 &   0 &   0 &   6 &   185 &   1 &   0 &   0 &  10 \\  
  4 &   0 &   0 & 355 &   1 & 28  &   0 &   0 & 347 &   0 &  26 \\ 
   \hline
\end{tabular}
\caption{Results of the breast cancer data experiment. Contingency table for the PAM50 subtypes and BMMx (left) and BMM (right).}
\label{tab:BRCA_confusion_PAM50}
\end{table}

We also plot the distribution of the additional continuous covariates according to cluster assignment from BMMx, with ESR1 and BRCA1 displayed in Figure \ref{fig:BRCA_ESR1_BRCA1} and BRCA2 and TP53 displayed in Figure \ref{fig:BRCA_BRCA2_TP53}. Overall, the ESR1 RNAseq measurements are higher in cluster 3, which is the cluster that contains the most patients with the Basal breast cancer subtype. Cluster 4 (mainly Normal-like subtype) has an overall higher proportion of lower BRCA1 and BRCA2 values. TP53 is more represented in cluster 1, which mainly represents the Luminal B breast cancer subtype.

Additionally, we can compare the two methods by looking at the posterior probability of the genes being ranked top-10 in the consensus parameter $\bm{\rho}_c$ for each cluster, and then by ranking the genes showing such probabilities to be the largest, as displayed in Table \ref{tab:BRCA_top10_bmmx} and Table \ref{tab:BRCA_top10_bmm} for BMMx and BMM respectively. The lists of genes showing the largest top-10 probabilities (in short, from now on, the ``top-10 gene set'') in clusters 1, 3, and 4 for BMMx are quite similar to the top-10 gene sets for BMM. There are however some differences: in cluster 1 the gene BCL-2 appears in the top-10 gene set for BMMx, but not for BMM (for which the gene TYMS appears instead), although the two methods show overall quite similar cluster assignments for this cluster. There is slightly less agreement for cluster 2, for example regarding the top-10 probability for gene BCL-2, which is higher in BMM versus BMMx.  Additionally, the gene CENPF only appears in the top-10 gene set for BMMx, while the gene MELK only appears in the top-10 gene set for BMM (in cluster 2). One can here potentially argue that the top-10 gene set in BMM discriminates the subtypes in cluster 2 slightly better, as evident in Table \ref{tab:BRCA_confusion_PAM50}. On the other hand, BMMx has an overall better estimation of clusters 3 and 4, suggesting that the top-10 gene sets in these clusters characterize them well. Nonetheless, if one looks at the first 8-10 genes, as well as the last 3-5 genes, in the cluster-specific Cumulative Probability (CP) consensus ranking of the genes, the methods conclude quite similarly, as evident in Table \ref{supp:BRCA_CP_consensus} in the supplementary material.

In conclusion, these results show that clustering via BMMx, making use of the available clinical information via a patient-specific prior, can increase the accuracy of the estimated cancer subtypes, as compared to clustering methods (such as BMM) that only make use of molecular information. This highlights the great potential of covariate-informed clustering methods in uncovering hidden patterns in complex biological datasets, by properly combining molecular information with the patient-specific clinical picture.

\begin{figure}[!htb]
\minipage{0.49\textwidth}
\centering
  \includegraphics[width=\linewidth, height=8cm, keepaspectratio]{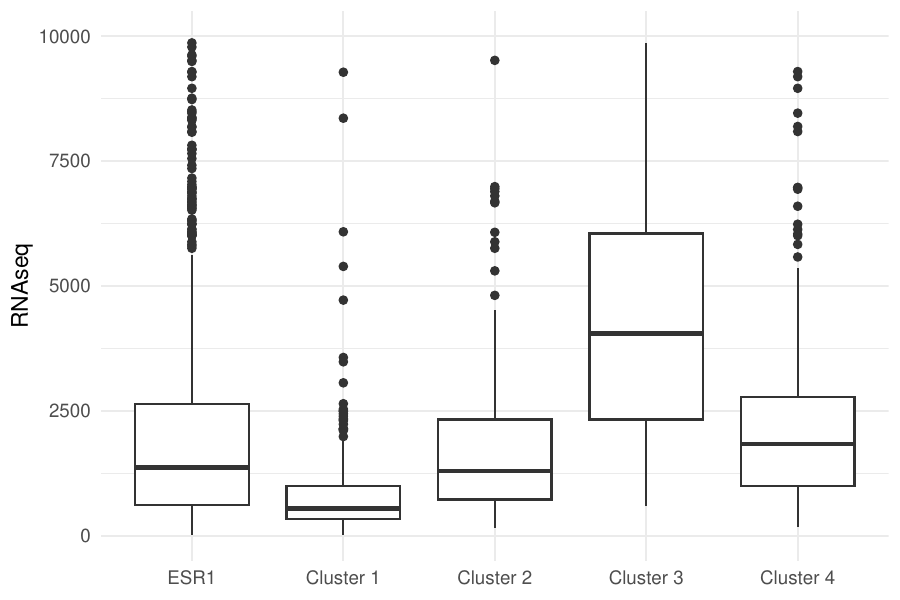}
\endminipage 
\minipage{0.49\textwidth}
\centering
  \includegraphics[width=\linewidth, height=8cm, keepaspectratio]{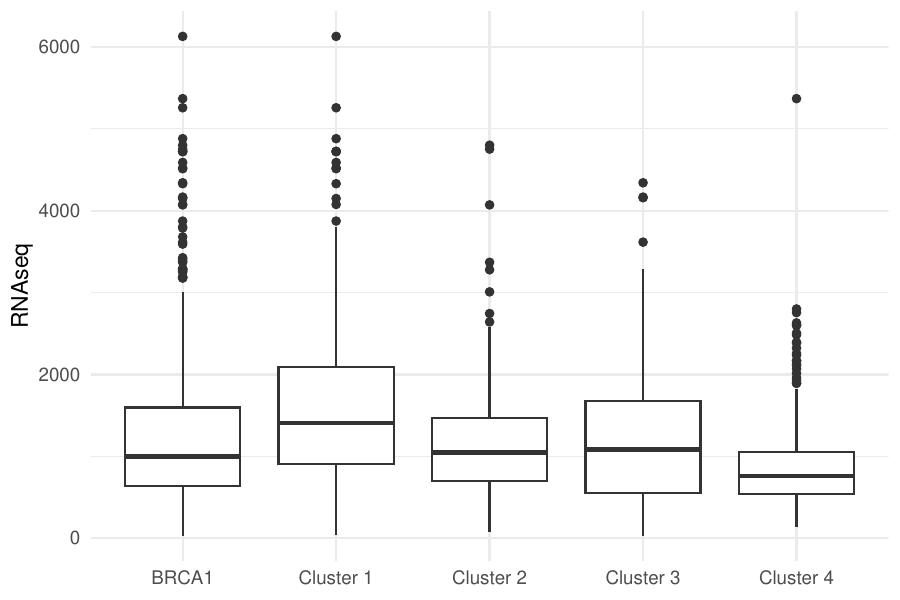}
\endminipage 
\caption{Results of the breast cancer data experiment. Boxplots of ESR1 (left) and BRCA1 (right) according to cluster assignment from BMMx with $C=4$. The left box in each plot represents all measurements (not clustered).}
\label{fig:BRCA_ESR1_BRCA1}
\end{figure}

\begin{figure}[!htb]
\minipage{0.49\textwidth}
\centering
  \includegraphics[width=\linewidth, height=8cm, keepaspectratio]{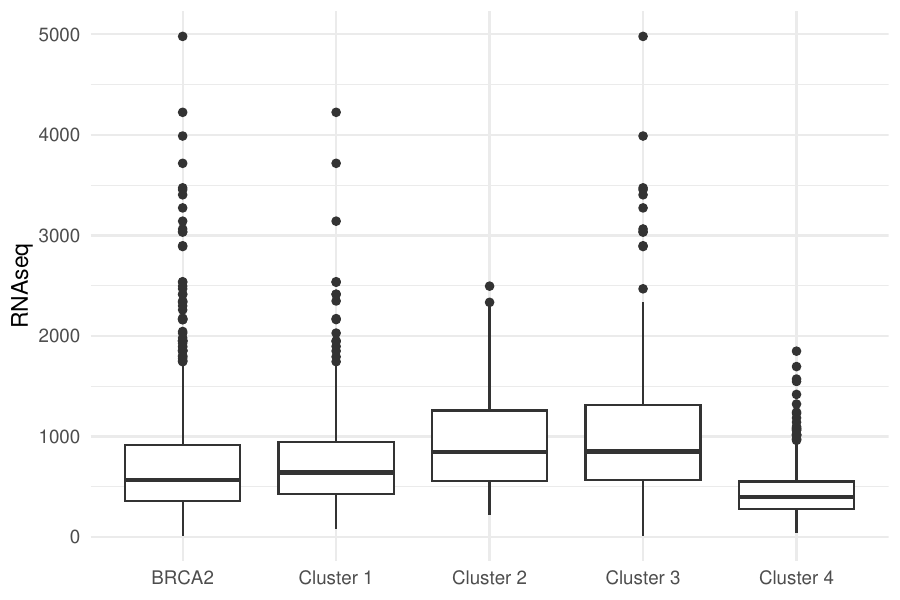}
\endminipage 
\minipage{0.49\textwidth}
\centering
  \includegraphics[width=\linewidth, height=8cm, keepaspectratio]{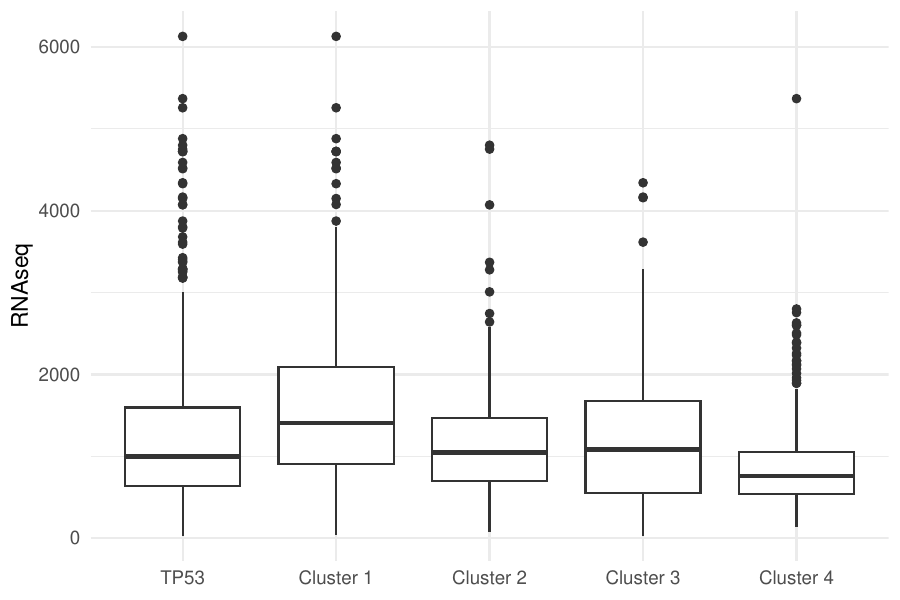}
\endminipage 
\caption{Results of the breast cancer data experiment. Boxplots of BRCA2 (left) and TP53 (right) according to cluster assignment from BMMx with $C=4$. The left box in each plot represents all measurements (not clustered).}
\label{fig:BRCA_BRCA2_TP53}
\end{figure}

\begin{table}[!htb]
\centering
\begin{tabular}{rlr|lr|lr|lr}
  \hline
  & \multicolumn{2}{c|}{Cluster 1} & \multicolumn{2}{c|}{Cluster 2}& \multicolumn{2}{c|}{Cluster 3} & \multicolumn{2}{c}{Cluster 4}\\ \hline
1 & MELK & 1.00 & KRT17 & 1.00 & KRT17 & 1.00 & CCNE1 & 1.00 \\ 
  2 & UBE2C & 1.00 & UBE2C & 1.00 & UBE2C & 1.00 & ACTR3B & 1.00 \\ 
  3 & MIA & 0.99 & MKI67 & 0.99 & MKI67 & 0.99 & TMEM45B & 1.00 \\ 
  4 & ACTR3B & 0.99 & ACTR3B & 0.97 & ACTR3B & 0.97 & UBE2C & 1.00 \\ 
  5 & KRT17 & 0.99 & SLC39A6 & 0.97 & SLC39A6 & 0.97 & MIA & 1.00 \\ 
  6 & TMEM45B & 0.99 & CCNE1 & 0.96 & CCNE1 & 0.96 & MELK & 0.99 \\ 
  7 & CCNE1 & 0.98 & MIA & 0.96 & MIA & 0.96 & KRT17 & 0.99 \\ 
  8 & BCL2 & 0.96 & FOXC1 & 0.94 & FOXC1 & 0.94 & MLPH & 0.97 \\ 
  9 & MLPH & 0.96 & CENPF & 0.79 & CENPF & 0.79 & TYMS & 0.96 \\ 
  10 & MDM2 & 0.87 & BCL2 & 0.44 & BCL2 & 0.44 & FGFR4 & 0.94 \\ 
   \hline
\end{tabular}
\caption{Results of the breast cancer data experiment. Lists of genes showing largest posterior probability of being ranked top-10 in the consensus ranking parameter of each of the clusters, for BMMx.}
\label{tab:BRCA_top10_bmmx}
\end{table}

\begin{table}[!htb]
\centering
\begin{tabular}{rlr|lr|lr|lr}
  \hline
  & \multicolumn{2}{c|}{Cluster 1} & \multicolumn{2}{c|}{Cluster 2}& \multicolumn{2}{c|}{Cluster 3} & \multicolumn{2}{c}{Cluster 4}\\ \hline
1 & CCNE1 & 1.00 & MKI67 & 0.98 & NDC80 & 1.00 & KRT17 & 1.00 \\ 
  2 & ACTR3B & 1.00 & UBE2C & 0.98 & SLC39A6 & 0.99 & MIA & 1.00 \\ 
  3 & BCL2 & 1.00 & ACTR3B & 0.97 & PTTG1 & 0.99 & UBE2C & 1.00 \\ 
  4 & UBE2C & 1.00 & SLC39A6 & 0.96 & MKI67 & 0.99 & MELK & 0.99 \\ 
  5 & MIA & 1.00 & CCNE1 & 0.96 & FGFR4 & 0.99 & MLPH & 0.99 \\ 
  6 & MELK & 0.99 & MIA & 0.95 & NUF2 & 0.99 & TMEM45B & 0.99 \\ 
  7 & TMEM45B & 0.99 & KRT17 & 0.95 & KRT17 & 0.98 & FGFR4 & 0.98 \\ 
  8 & KRT17 & 0.98 & FOXC1 & 0.93 & KIF2C & 0.97 & CCNE1 & 0.98 \\ 
  9 & MLPH & 0.96 & BCL2 & 0.90 & UBE2C & 0.83 & ACTR3B & 0.98 \\ 
  10 & TYMS & 0.86 & MELK & 0.88 & MIA & 0.78 & TYMS & 0.97 \\ 
   \hline
\end{tabular}
\caption{Results of the breast cancer data experiment. Lists of genes showing largest posterior probability of being ranked top-10 in the consensus ranking parameter of each of the clusters for BMM.}
\label{tab:BRCA_top10_bmm}
\end{table}

\section{Discussion and conclusions}\label{sec:discussion}

In this paper, we have developed a novel Mallows-based finite mixture model informed by both the ranking or preference data and assessor-specific covariate information. The method extends the applicability of the Bayesian Mallows ranking model to allow for the inclusion of covariates in the modeling framework. While several rank models have been proposed in the past, the incorporation of covariates has been limited, hindering a comprehensive understanding of the interplay between rankings and assessor characteristics. In contrast, our model bridges this gap, which can potentially lead to richer insights into individual behaviors.

The simulation studies described in Section \ref{sec:simulations} showed that the method performs well on datasets of varying sizes and under varying data-generating procedures, with both continuous and categorical covariates. We applied the proposed model to two different case study datasets, showcasing the method's versatility, which can be applied to any heterogeneous rank dataset with covariate information.

The Mallows model normally works very well with a large number of assessors $N$. However, our approach currently faces limitations in scalability concerning this number, as the current version of BMMx relies on the computation of the similarity of all assessors within each cluster. Additionally, BMM is already limited when the number of items $n$ is very large, which is a problem that has been addressed in the lower-dimensional Bayesian Mallows Model \cite{eliseussen2022} and the PseudoMallows model \cite{liu2021}. The former is a dimension reduction method that performs variable selection within the Bayesian Mallows model, allowing for ultra-high-dimensional datasets, while the latter is an efficient variational approximation to the Bayesian Mallows model, which leads to faster inference while maintaining similar accuracy compared to the exact model. Future research directions could involve the integration of BMMx within these frameworks to enhance scalability, especially in cases involving a large number of items $n$.

Our current implementation ignores label-switching inside the MCMC, as it has previously been shown that doing so is better in achieving full convergence \cite{jasra2005, celeux2000}. For example, the MCMC iterations can be re-ordered after convergence by using Stephen's algorithm \cite{stephens2000}. Nonetheless, a more seamless implementation to handle label-switching in our MCMC methodology warrants future consideration and could be addressed in subsequent enhancements. 

Our MCMC approach does exhibit low acceptance rates, a phenomenon that can be attributed to the ``stickiness problem'' encountered in BMM when the number of items, $n$, is significantly smaller than the number of assessors, $N$. While a low acceptance rate is generally not desired, it does not seem to affect the accuracy of the results in our experiments. One potential solution might involve implementing an adaptive sampling strategy within the MCMC scheme, such as setting a lower $\alpha$ value at the beginning of the chain and increasing it after sufficient exploration. Another possibility is to improve the exploration of the space via tempering, although this falls outside the scope of the current paper.

To our knowledge, there currently exists no Mallows-based method that includes assessor-related covariates.

\section*{Implementation}
The BMMx method is implemented in R/C++ \cite{R, rcpp}, and all scripts are included in the GitHub repository \url{https://github.com/emilieodegaard/BMMx}.

\bibliographystyle{plain}
\bibliography{ms}

\appendix

\section{Derivation of the augmented similarity function $g$ in the case of a continuous covariate}\label{supp:sim_fun_cont}
Here $x_{jk} \sim N(\mu_c, \sigma^2)$ for $j \in S_c, j=1,\ldots,N$ and normal prior $\xi_c = \mu_c \sim N(\mu_0, \sigma_0^2)$.
\begin{align*}
    g(\mathcal{X}_{ck}) & = \int \prod_{j \in S_c} \left[q(x_{jk} | \xi_c) \right] q_{\xi}(\xi_c) \mathrm{d}\xi_c \\
    & = \int \prod_{j \in S_c} \frac{1}{ \sqrt{2 \pi}\sigma} \exp{\left\{- \frac{1}{2\sigma^2} (x_{jk} - \mu_c )^2\right\} }  \frac{1}{\sqrt{2 \pi}\sigma_0} \exp{\left\{- \frac{1}{2\sigma_0^2} (\mu_c -\mu_0)^2\right\} } \mathrm{d}\mu_c \\
    & =  \left( \frac{1}{\sqrt{2\pi \sigma}}\right)^{|S_c|} \frac{1}{\sqrt{2\pi \sigma_0}} \int \exp{\left\{- \frac{1}{2\sigma^2}   \sum_{j \in S_c} \left[ (x_{jk} - \mu_c)^2 \right] - \frac{1}{2\sigma_0^2} (\mu_c - \mu_0)^2\right\} }  \mathrm{d}\mu_c \\  
    & = \left( \frac{1}{\sqrt{2\pi \sigma}}\right)^{|S_c|} \frac{1}{\sqrt{2\pi \sigma_0}} \int \exp{\left\{- \frac{1}{2\sigma^2} \sum_{j \in S_c} \left[ x_{jk}^2 - 2x_{jk} \mu_c + \mu_c^2\right]-\frac{1}{2\sigma_0^2} (\mu_c^2 -2\mu_c\mu_0 + \mu_0^2)\right\} }  \mathrm{d}\mu_c \\ 
    & = \left( \frac{1}{\sqrt{2\pi \sigma}}\right)^{|S_c|} \frac{1}{\sqrt{2\pi \sigma_0}} \exp{\left\{\frac{-\mu_0^2}{2\sigma_0^2} - \sum_{j \in S_c} \frac{x_{jk}^2}{2\sigma^2}\right\} } \cdot \\
    & \hspace{0.5cm} \underbrace{ \int \exp{\left\{  \sum_{j \in S_c} \left[ \frac{-\mu_c^2}{2}\left(\frac{1}{\sigma_0^2} + \frac{1}{\sigma_0^2 |S_c|} \right) + \mu_c \left( \frac{2x_{jk}}{2 \sigma^2} + \frac{2\mu_0}{2\sigma_0^2 |S_c|} \right) \right] \right\} } \mathrm{d}\mu_c }_{\text{$=A$}}
\end{align*}
Let $\Tilde{\mu}_j = \Tilde{\sigma}^2 \left(\frac{x_{jk}}{\sigma^2} + \frac{\mu_0}{\sigma_0^2 |S_c|} \right)$ and $\Tilde{\sigma}^2 = (\frac{1}{\sigma^2}+ \frac{1}{\sigma_0^2 |S_c|})^{-1}$. Focusing on the integral $A$ only:
\begin{align*}
 A & = \int \exp{\left\{  \sum_{j \in S_c} \left[ \frac{-\mu_c^2}{2\Tilde{\sigma^2}} + \frac{\mu_c \Tilde{\mu}^2_j}{\Tilde{\sigma}^2} - \frac{\Tilde{\mu}_j^2}{2\Tilde{\sigma}^2} + \frac{\Tilde{\mu}_j^2}{2\Tilde{\sigma}^2}\right] \right\} } \mathrm{d}\mu_c \\
 & = \exp{\left\{  \sum_{j \in S_c}  \frac{\Tilde{\mu}_c^2}{2\Tilde{\sigma^2}} \right\}}  \int \exp{\left\{\sum_{j \in S_c} -\frac{(\mu_c - \Tilde{\mu}_j)^2}{2\Tilde{\sigma^2}} \right\}} \mathrm{d}\mu_c\\
 & = \exp{\left\{  \sum_{j \in S_c}  \frac{\Tilde{\mu}_c^2}{2\Tilde{\sigma^2}} \right\}} \left(\sqrt{2\pi} \Tilde{\sigma}\right)^{|S_c|}
\end{align*}
Then:
\begin{align*}
    g(\mathcal{X}_{ck}) & = \left( \frac{\Tilde{\sigma}}{\sigma}\right)^{|S_c|} \frac{1}{\sqrt{2\pi \sigma_0}} \exp{\left\{ -\frac{\mu_0^2}{2\sigma_0^2} - \sum_{j \in S_c} \frac{x_{jk}^2}{2\sigma^2} +  \sum_{j \in S_c} \frac{\Tilde{\mu}_j^2}{2\Tilde{\sigma}^2} \right\}} \\
    & \propto  \left( \frac{\Tilde{\sigma}}{\sigma}\right)^{|S_c|} \exp{\left\{ - \frac{1}{2} \sum_{j \in S_c} \left[\frac{x_{jk}^2}{\sigma^2} - \frac{\Tilde{\mu}_j^2}{\Tilde{\sigma}^2} \right]  \right\}}.
\end{align*}
\section{Derivation of the augmented similarity function $g$ in the case of a categorical covariate}\label{supp:sim_fun_categorical}
Here $x_{jk} \sim \mathcal{C}(\nu_1, \ldots, \nu_B)$  for $j \in S_c, j=1,\ldots,N$ with $B$ categories and Dirichlet prior $\xi_c = \nu_c \sim \mathcal{D}(\varphi_1,\ldots,\varphi_B)$. 
\begin{align*}
    g(\mathcal{X}_{ck}) & = \int \prod_{j \in S_c} \left[q(x_{jk} | \xi_c) \right] q_{\xi}(\xi_c) \mathrm{d}\xi_c \\
    & = \int \prod_{j \in S_c} \prod_{b=1}^B \nu_{c,b}^{1_{x_{jk}=b}} \frac{\Gamma(\sum_{b=1}^B \varphi_b)}{\prod_{b=1}^B \Gamma(\varphi_b)}\prod_{b=1}^B \nu_{c,b}^{\varphi_b-1}\mathrm{d}\nu_c \\
    & = \frac{\Gamma(\sum_{b=1}^B \varphi_b)}{\prod_{b=1}^B \Gamma(\varphi_b)} \int \prod_{b=1}^B \nu_{c,b}^{\sum_{j \in S_c} 1_{x_{jk}=b} + \varphi_b -1}\mathrm{d}\nu_c \\
    & = \frac{\Gamma(\sum_{b=1}^B \varphi_b)}{\prod_{b=1}^B \Gamma(\varphi_b)} \prod_{b=1}^B \frac{\Gamma(\sum_{j \in S_c} 1_{x_{jk}=b} + \varphi_b)}{\Gamma( \sum_{b=1}^B \sum_{j \in S_c} 1_{x_{jk}=b} + \varphi_b)} \\
    & \propto \, \prod_{b=1}^B \frac{\Gamma(\sum_{j \in S_c} 1_{x_{jk}=b} + \varphi_b)}{\Gamma( \sum_{b=1}^B \sum_{j \in S_c} 1_{x_{jk}=b} + \varphi_b)}.
\end{align*}

\section{Details on the Gibbs sampling for $z$ in the presence of covariates}\label{supp:gibbs_sampling}

Let $z_{-j}$ be all labels $z$ excluding $z_j$. The full conditional distribution for $z_j$, $j=1,\ldots,N$ can be derived in the following way:
\begin{align*}
    P(z_j | z_{-j}, \bm{\rho}, \alpha, \tau, \bm{R}, \bm{x}) & = \frac{P(z, \bm{\rho}, \alpha, \tau | \bm{R}, \bm{x})}{P(z_{-j}, \bm{\rho}, \alpha, \tau | \bm{R}, \bm{x})} \\
    & = \frac{P(\bm{R} | z, \bm{\rho}, \alpha, \tau, \bm{x}) P(z, \bm{\rho}, \alpha, \tau | \bm{x})}{P(\bm{R} | \bm{x}) P(z_{-j}, \bm{\rho}, \alpha, \tau | \bm{R}, \bm{x})} \\ 
    & = \frac{P(\bm{R} | z, \bm{\rho}, \alpha, \tau, \bm{x}) P(z, \bm{\rho}, \alpha, \tau | \bm{x})}{P(\bm{R} | \bm{x}) \frac{P(\bm{R} | z_{-j}, \bm{\rho}, \alpha, \tau, \bm{x}) P(z_{-j}, \bm{\rho}, \alpha, \tau, \bm{x})}{P(\bm{R} | \bm{x})}} \\ 
    & = \frac{P(\bm{R} | z, \bm{\rho}, \alpha, \tau, \bm{x}) P(z, \bm{\rho}, \alpha, \tau | \bm{x})}{P(\bm{R} | z_{-j}, \bm{\rho}, \alpha, \tau, \bm{x}) P(z_{-j}, \bm{\rho}, \alpha, \tau, \bm{x})} \\ 
    & = \frac{P(\bm{R}_j | z_j, \bm{\rho}, \alpha, \tau, \bm{x}) P(z | \bm{\rho}, \alpha, \tau, \bm{x}) P(\bm{\rho}, \alpha, \tau | \bm{x})}{P(\bm{R}_j | z_{-j}, \bm{\rho}, \alpha, \tau, \bm{x}) P(z_{-j} | \bm{\rho}, \alpha, \tau, \bm{x}) P(\bm{\rho}, \alpha, \tau | \bm{x})} \\ 
    & = \frac{P(\bm{R}_j | z_j, \bm{\rho}, \alpha, \tau, \bm{x}) P(z | \bm{\rho}, \alpha, \tau, \bm{x})}{P(\bm{R}_j | z_{-j}, \bm{\rho}, \alpha, \tau, \bm{x}) P(z_{-j} | \bm{\rho}, \alpha, \tau, \bm{x}) } .
\end{align*}
The first term in the denominator can be rewritten, by applying the law of total probability:
\begin{align*}
    P(\bm{R}_j | z_{-j}, \bm{\rho}, \alpha, \tau, \bm{x}) & = \sum_{z_j \in \{1,\ldots,C\}} \left[ P(\bm{R}_j | z, \bm{\rho}, \alpha, \tau, \bm{x}) P(z_j | z_{-j}, \bm{\rho}, \alpha, \tau, \bm{x} ) \right] \\
    & = \sum_{z_j \in \{1,\ldots,C\}} \left[ P(\bm{R}_j | z, \bm{\rho}, \alpha, \tau, \bm{x}) \frac{P(z | \bm{\rho}, \alpha, \tau, \bm{x} )  }{P(z_{-j} | \bm{\rho}, \alpha, \tau, \bm{x})  } \right] \\
    & = \frac{1}{P(z_{-j} | \bm{\rho}, \alpha, \tau, \bm{x})} \sum_{z_j \in \{1,\ldots,C\}} \left[ P(\bm{R}_j | z, \bm{\rho}, \alpha, \tau, \bm{x}) P(z | \bm{\rho}, \alpha, \tau, \bm{x} )   \right].
\end{align*}
Applying the conditional independence assumptions of the various parameters stated in Section \ref{sec:model_bmmx}, we are left with the following expression for the full conditional distribution for $z_j$:
\begin{equation}\label{eq:z_full_cond}
    P(z_j | z_{-j}, \bm{\rho}, \alpha, \tau, \bm{R}_j, \bm{x}) = \frac{P(\bm{R}_j | z_j, \bm{\rho}, \alpha) P(z | \tau, \bm{x})}{\sum_{z_j \in \{1,\ldots,C\}} \left[ P(\bm{R}_j | z, \bm{\rho}, \alpha) P(z |  \tau, \bm{x} ) \right]}
\end{equation}
where the term $P(z | \tau, \bm{x})$ is as in \eqref{eq:bmmx_cluster_lab_prior} and $P(\bm{R}_j | z_j, \bm{\rho}, \alpha) = Z_n(\alpha_{z_j})^{-1} \exp\{-(\alpha_{z_j}/n) d(\bm{R}_j, \bm{\rho}_{z_j})\}$. Moreover, we have
\begin{align*}
    P(z_j = c | z_{-j}, \bm{\rho}_c, \alpha_c, \tau_c, \bm{R}_j, \bm{x}^*_c) & \, \propto \, P(\bm{R}_j | z_j, \bm{\rho}_c, \alpha_c) P(z | \tau_c, \bm{x}^*_c) \\
    & \, \propto \, Z_n(\alpha_{z_j})^{-1} \exp\{-(\alpha_{c}/n) d(\bm{R}_j, \bm{\rho}_{c})\} \tau_c g(x^*_c)
\end{align*}
where $x^*_{c} = (x_{c}, l : z_l = c \cup \{j\})$ for $l=1,\ldots,N$ (i.e. $x^*_{c}$ is the covariate of the assessors belonging to cluster $c$ \emph{and} the $j$-th assessor). 

\newpage
\section{MCMC algorithm for BMMx in the case of missing data}\label{supp:bmmx_mcmc_missing}
The pseudo-code of the MCMC algorithm in the case of missing data as described in Section \ref{sec:model_mcmc_missing}, is reported here. 
\begin{algorithm}[!htb]
\SetAlgoLined
\scriptsize
\textbf{input:} $\{\mathcal{S}_1,\ldots,\mathcal{S}_N\}$, 
$\mathbf{x}_1,\ldots,\mathbf{x}_N$; $C$, $\psi$, $c_1$, $c_2$, $\varphi$, $\theta$, $\gamma$, $\sigma_\alpha^2$, $\alpha_{\textnormal{jump}}$, $d(\cdot, \cdot)$, $l$, $Z_n(\alpha)$, $M$, $aug$ \\
\textbf{output:} posterior distributions of $\bm{\rho}_1,\ldots,\bm{\rho}_C$, $\alpha_1,\ldots,\alpha_C$, $\tau_1,\ldots,\tau_C$, $z_1,\ldots,z_N$\\
 \textbf{initialization:} randomly generate $\bm{\rho}_{1,0},\ldots,\bm{\rho}_{C,0}$, $\alpha_{1,0},\ldots,\alpha_{C,0}$, $\tau_{1,0},\ldots,\tau_{C,0}$ and $z_{1,0},\ldots,z_{N,0}$ \leavevmode \\ 
\lForEach{$j \gets 1$ \KwTo $N$}{randomly generate $\bm{\tilde{R}}_j^0$ compatible with $\mathcal{S}_j$}
 \For{$m \gets 1$ \KwTo $M$}{
    \textbf{Gibbs step}: update $\tau_1,\ldots,\tau_C$ \\
    compute: $n_c=\sum_{j=1}^N 1_c(z_{j,m-1})$ for $c=1,\ldots,C$
    sample: $\tau_1,\ldots,\tau_C \sim \mathcal{D}(\psi + n_1,\ldots,\psi+n_C)$ \\
    \For{$c \gets 1$ \KwTo $C$}{
        \textbf{M-H step}: update $\bm{\rho}_c$ \\
        sample: $\bm{\rho}_c' \sim \textnormal{LS}(\bm{\rho}_{c,m-1}, l)$ and $u \sim \mathcal{U}(0,1)$ \\
        compute: \textit{ratio} $\gets$ equation \eqref{eq:accept_prob_rho} with $\bm{\rho}_c \gets \bm{\rho}_{c,m-1}$, $\alpha_c \gets \alpha_{c,m-1}$ and $z_j \gets z_{j,m-1}$\\
        \eIf{$u <$ ratio}{$\bm{\rho}_{c,m} \gets \bm{\rho}'_{c}$}{$\bm{\rho}_{c,m} \gets \bm{\rho}_{c,m-1}$}
        \textbf{M-H step}: update $\alpha_c$ \leavevmode \\
        sample: $\alpha_c' \sim \log \mathcal{N}(\log(\alpha_{c,m-1}), \sigma_\alpha^2)$ and $u \sim \mathcal{U}(0,1)$ \leavevmode\\
        compute: \textit{ratio} $\gets $ equation \eqref{eq:accept_prob_alpha} with $\bm{\rho}_c \gets \bm{\rho}_{c,m}$, $\alpha_c \gets \alpha_{c,m-1}$ and  $z_j \gets z_{j,m-1}$\\
        \eIf{$u <$ ratio}{$\alpha_{c,m} \gets \alpha'_{c}$}{$\alpha_{c,m} \gets \alpha_{c,m-1}$}
    }
    \textbf{Gibbs step}: update $z_1,\ldots,z_N$ \leavevmode \\
    \For{$j \gets 1$ \KwTo $N$}{
    \For{$c \gets 1$ \KwTo $C$}{
        \eIf{aug}{
            \lIf{$x$ continuous}{$g_k(\mathcal{X}_{ck})$ $\gets$ equation \eqref{eq:sim_fun_aug_continuous}}
            \lIf{$x$ categorical}{$g_k(\mathcal{X}_{ck})$ $\gets$ equation \eqref{eq:sim_fun_aug_categorical}}
            }
        		{
            \lIf{$x$ continuous}{$g_k(\mathcal{X}_{ck})$ $\gets$ equation \eqref{eq:sim_fun_alternative_cont}}
            \lIf{$x$ categorical}{$g_k(\mathcal{X}_{ck})$ $\gets$ equation \eqref{eq:sim_fun_alternative_cat}}
            }
    compute: $p_{cj} \gets$ equation \eqref{eq:gibbs_clusterlabels} with $\tau_c \gets \tau_{c,m}$, $\alpha_c \gets \alpha_{c,m}$ and $\bm{\rho}_c \gets \bm{\rho}_{c,m}$
    }
    sample: $z_{j,m} \sim \mathcal{M}(p_{1j},\ldots,p_{Cj})$
    } 
    \textbf{M-H step:}: update $\bm{\tilde{R}}_1,\ldots,\bm{\tilde{R}}_N$: \leavevmode \\
    \For{$j \gets 1$ \KwTo $N$}{
    sample: $\bm{\tilde{R}}'_j$ in $\mathcal{S}_j$ from the leap-and-shift distribution centered at $\bm{\tilde{R}}^{m-1}_j$ \\
    compute: \textit{ratio} $\gets $ equation \eqref{eq:accept_data_aumentation} with $\bm{\rho}_{z_j} \gets \bm{\rho}_{z_{j,m},m}$, $\alpha_{z_j} \gets \alpha_{z_{j,m},m-1}$ and  $\bm{\tilde{R}}_j \gets \bm{\tilde{R}}_j^{m-1}$\leavevmode \\
    sample $u \sim \mathcal{U}(0,1)$ \leavevmode \\
    \eIf{$u <$ ratio}{$\bm{\tilde{R}}_j^m \gets \bm{\tilde{R}}_j'$}{$\bm{\tilde{R}}_j^m \gets \bm{\tilde{R}}_j^{m-1}$}
	}
}
\caption{MCMC scheme for inference in BMMx with missing data}\label{alg:bmmx_mcmc_missing}
\end{algorithm}

\newpage


\title{Supplementary material to the paper: 
\\
Rank-based Bayesian clustering via covariate-informed Mallows mixtures}
\author{Emilie Eliseussen, Arnoldo Frigessi and Valeria Vitelli}
\date{\empty}

\maketitle

\renewcommand{\thefigure}{S\arabic{figure}}
\setcounter{figure}{0}
\renewcommand{\thetable}{S\arabic{table}}
\setcounter{table}{0}
\renewcommand{\thesection}{S\arabic{section}}
\setcounter{section}{0}

\section{Sushi case study}

\begin{table}[H]
\centering
\begin{tabular}{|l|l|}
\hline
 \textbf{Category} & \textbf{Region} \\ 
   \hline
 0 & Hokkaido      \\   
 1 & Tohoku        \\   
 2 & Hokuriku      \\    
 3 & Kanto+Shizuoka  \\  
 4 & Nagano+Yamanashi  \\
 5 & Chukyo           \\ 
 6 & Kinki            \\
 7 & Chugoku         \\ 
 8 & Shikoku          \\
 9 & Kyushu          \\  
10 & Okinawa         \\ 
11 & Foreign          \\  
  \hline
\end{tabular}
\caption{Labelling of regions in Japan, from sushi dataset. }\label{supp:sushi_regions}
\end{table}

\begin{figure}[H]
\minipage{\textwidth}
\includegraphics[width=\linewidth]{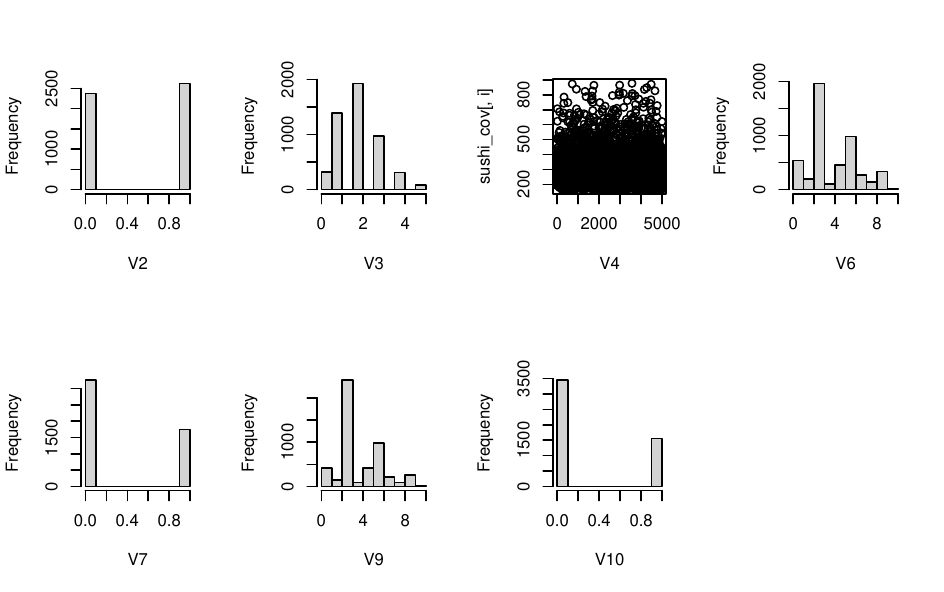}
\endminipage 
\caption{Descriptive plots of the covariates in the full sushi dataset ($N=5000$).}\label{supp:sushi_cov_all}
\end{figure}

\begin{figure}[H]
\minipage{\textwidth}
  \includegraphics[width=\linewidth]{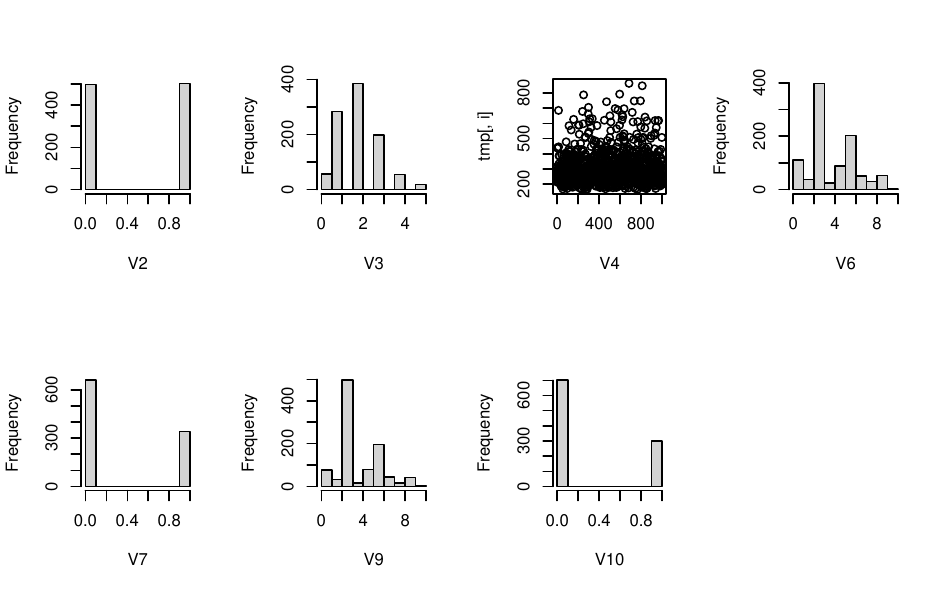}
\endminipage 
\caption{Descriptive plots of the covariates in the subset of the sushi dataset ($N=1000$).}\label{supp:sushi_cov_subset}
\end{figure}

\begin{figure}[H]
\minipage{\textwidth}
\centering
  \includegraphics[width=0.8\linewidth]{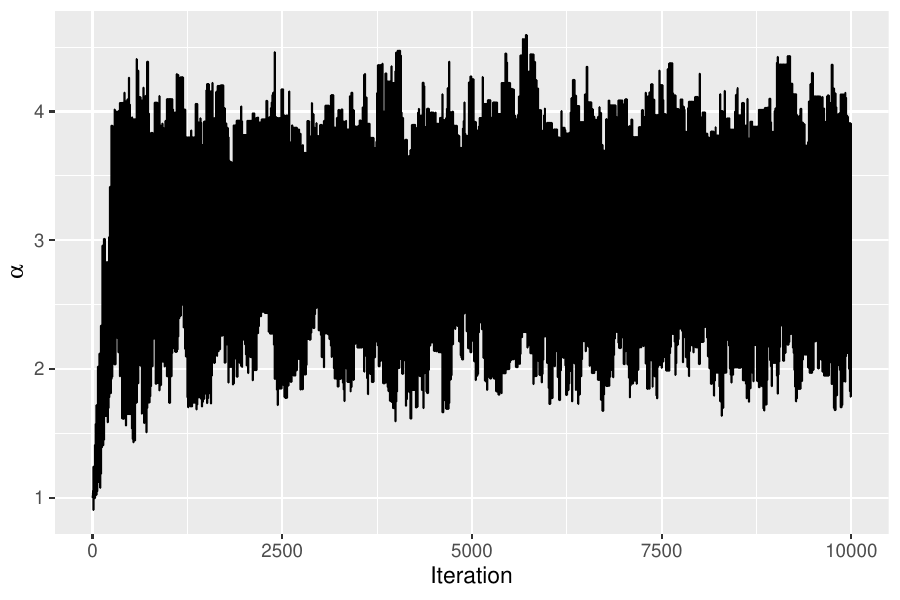}
\endminipage 
\caption{Trace plot of $\alpha$ for the sushi dataset, with $C=6$ from run with BMMx.}
\label{supp:sushi_alpha_convergence}
\end{figure}

\begin{figure}[H]
\minipage{\textwidth}
\centering
  \includegraphics[width=0.8\linewidth]{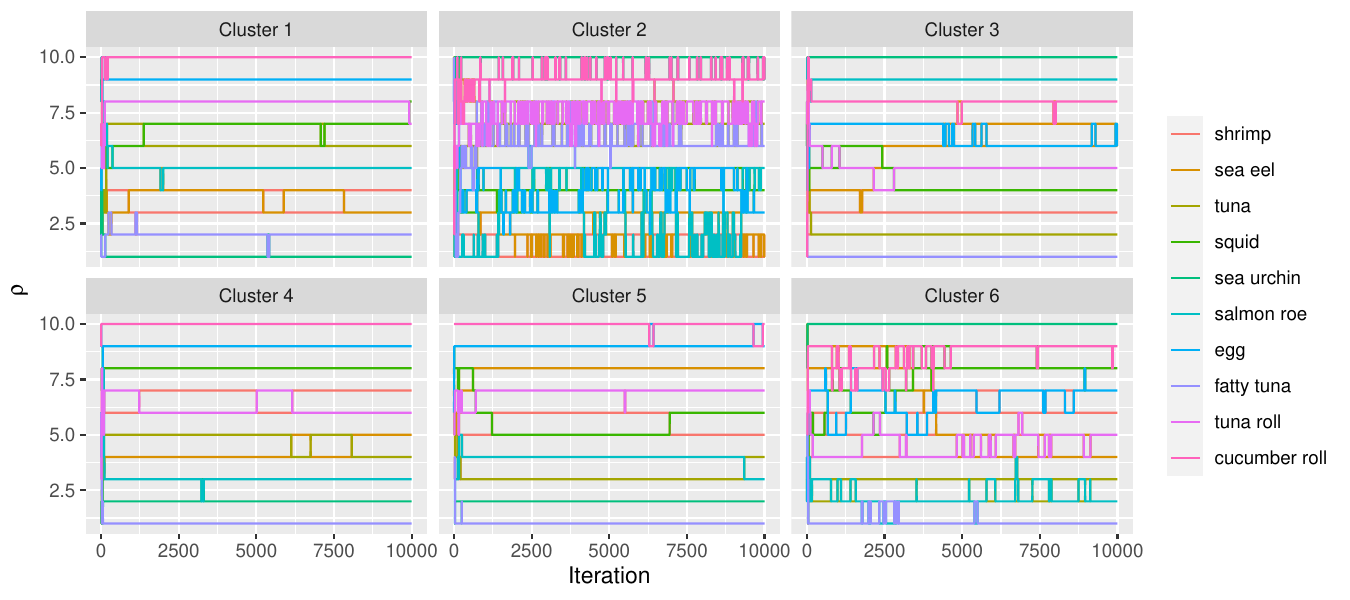}
\endminipage 
\caption{Trace plot of $\mathbf{\rho}_c$ for the sushi dataset, with $C=6$ from run with BMMx.}\label{supp:sushi_rho_convergence}
\end{figure}

\begin{figure}[H]
\minipage{\textwidth}
\centering
  \includegraphics[width=0.8\linewidth]{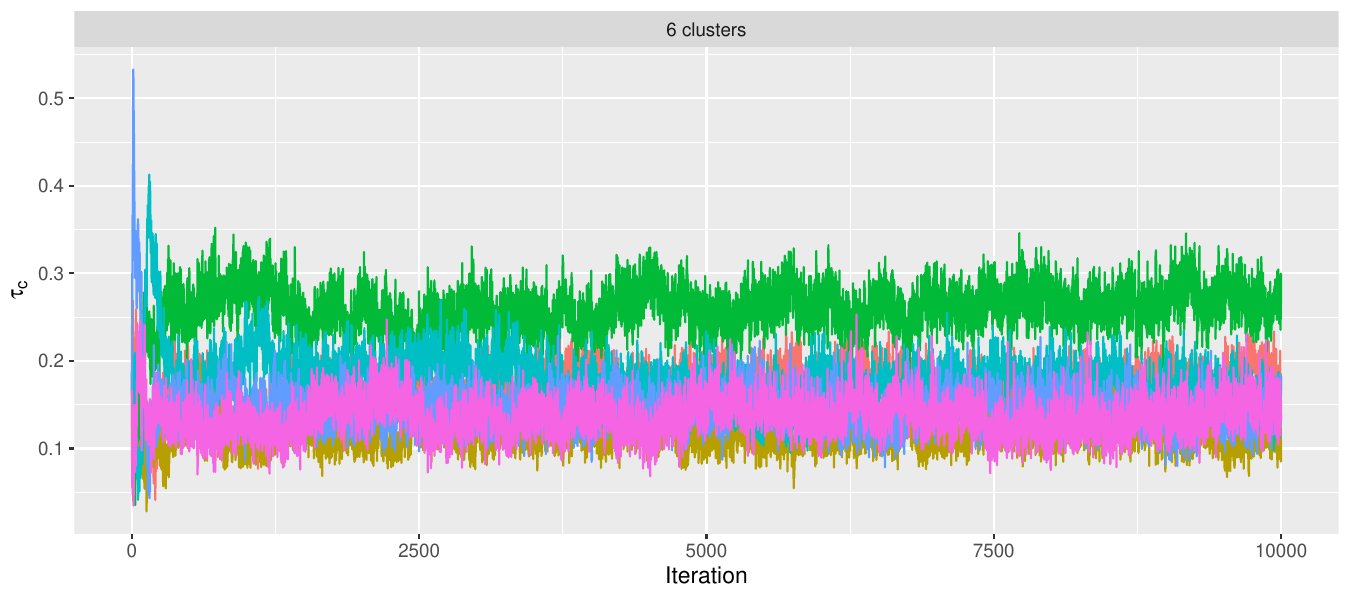}
\endminipage 
\caption{Trace plot of $\mathbf{\tau}_c$ for the sushi dataset, with $C=6$ from run with BMMx.}\label{supp:sushi_cluster_convergence}
\end{figure}

\begin{figure}[H]
\minipage{\textwidth}
\centering
  \includegraphics[width=0.8\linewidth]{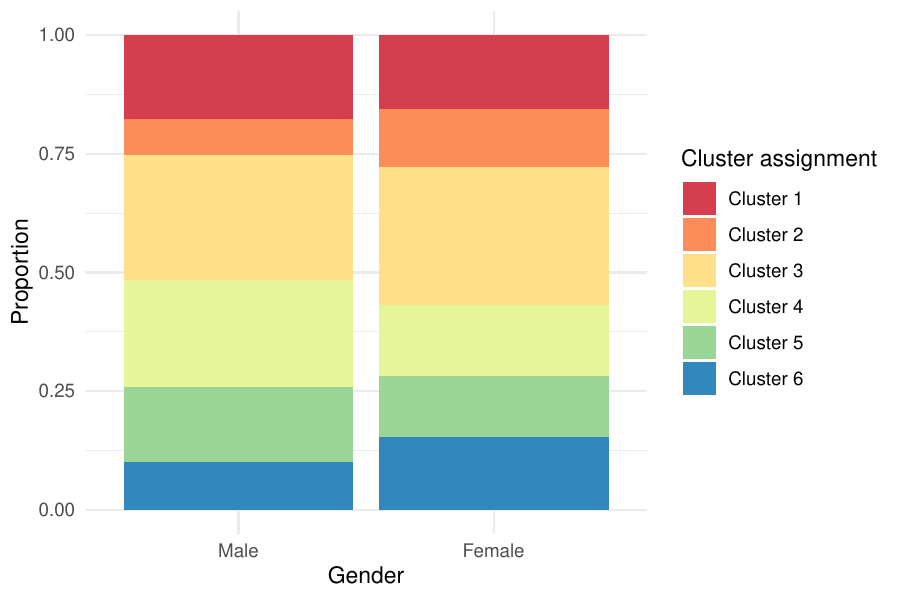}
\endminipage 
\caption{Distribution of covariate V2 according to cluster assignment from run with BMMx with $C=6$ for the sushi dataset. }\label{supp:sushi_cluster_assignment_V2}
\end{figure}

\begin{figure}[H]
\minipage{\textwidth}
\centering
  \includegraphics[width=0.8\linewidth]{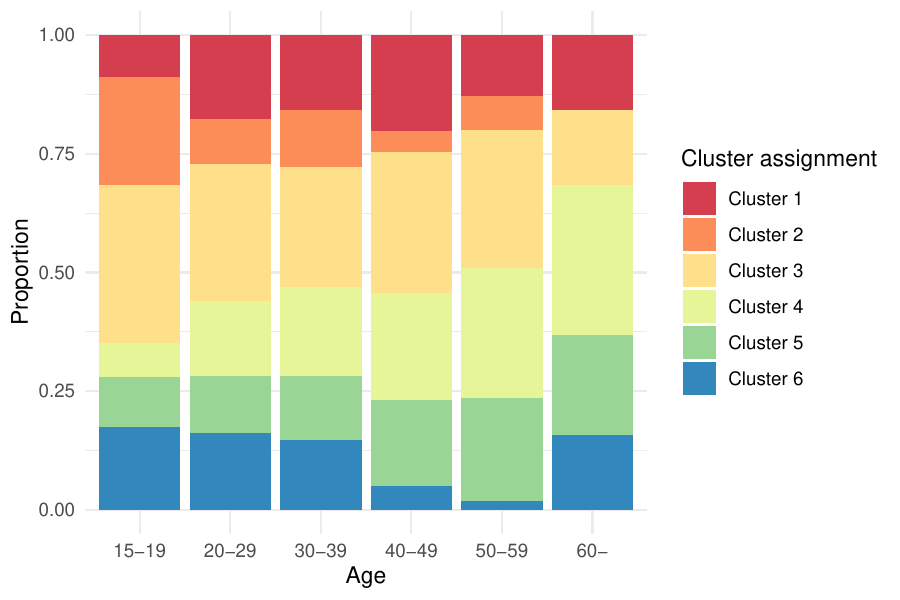}
\endminipage 
\caption{Distribution of covariate V3 according to cluster assignment from run with BMMx with $C=6$ for the sushi dataset.}\label{supp:sushi_cluster_assignment_V3}
\end{figure}

\begin{figure}[H]
\minipage{\textwidth}
\centering
  \includegraphics[width=0.8\linewidth]{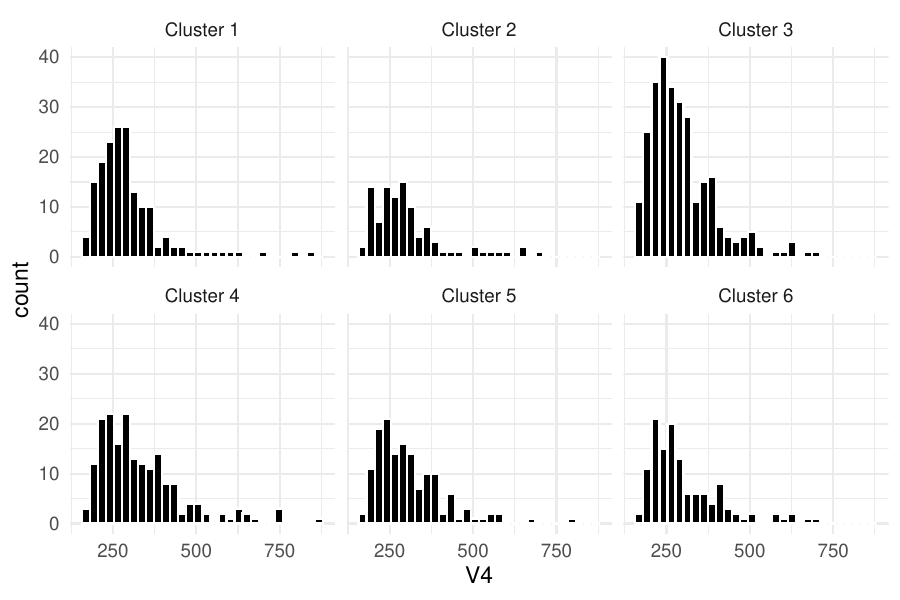}
\endminipage 
\caption{Distribution of covariate V4 according to cluster assignment from run with BMMx with $C=6$ for the sushi dataset.}\label{supp:sushi_cluster_assignment_V4}
\end{figure}

\begin{figure}[H]
\minipage{\textwidth}
\centering
  \includegraphics[width=0.8\linewidth]{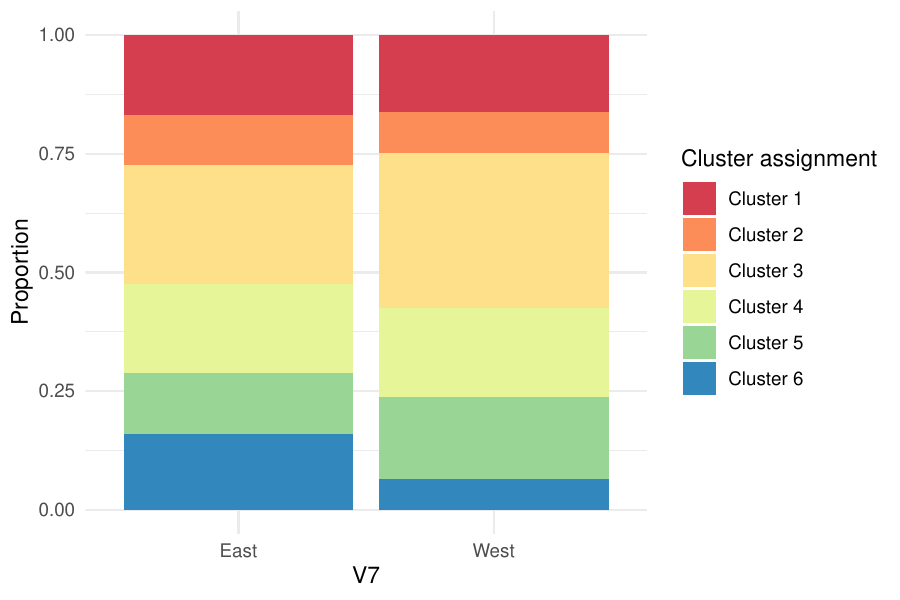}
\endminipage 
\caption{Distribution of covariate V7 according to cluster assignment from run with BMMx with $C=6$ for the sushi dataset.}\label{supp:sushi_cluster_assignment_V7}
\end{figure}

\begin{figure}[H]
\minipage{\textwidth}
\centering
  \includegraphics[width=0.8\linewidth]{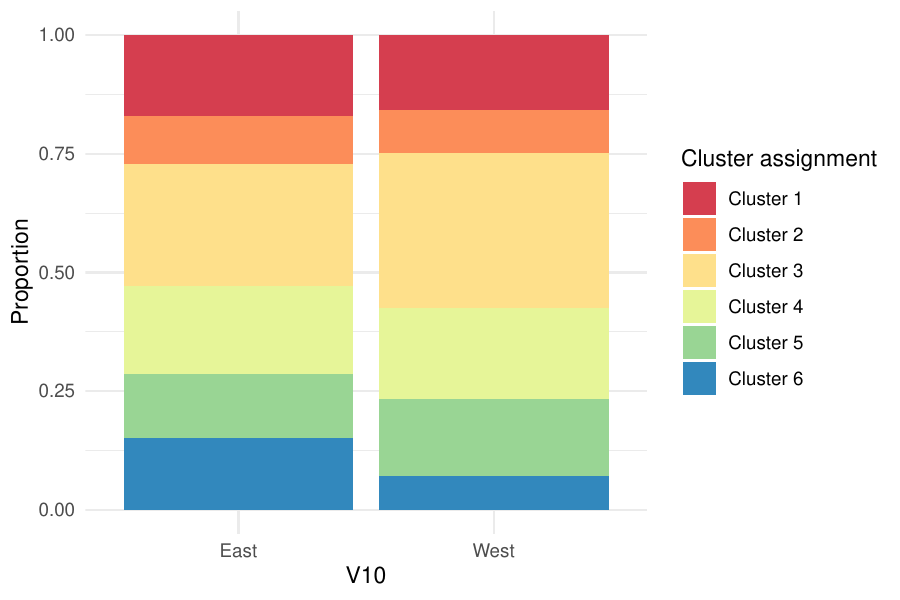}
\endminipage 
\caption{Distribution of covariate V10 according to cluster assignment from run with BMMx with $C=6$ for the sushi dataset.}\label{supp:sushi_cluster_assignment_V10}
\end{figure}

\section{Breast cancer case study}

\begin{figure}[H]
\minipage{\textwidth}
\centering
  \includegraphics[width=0.8\linewidth]{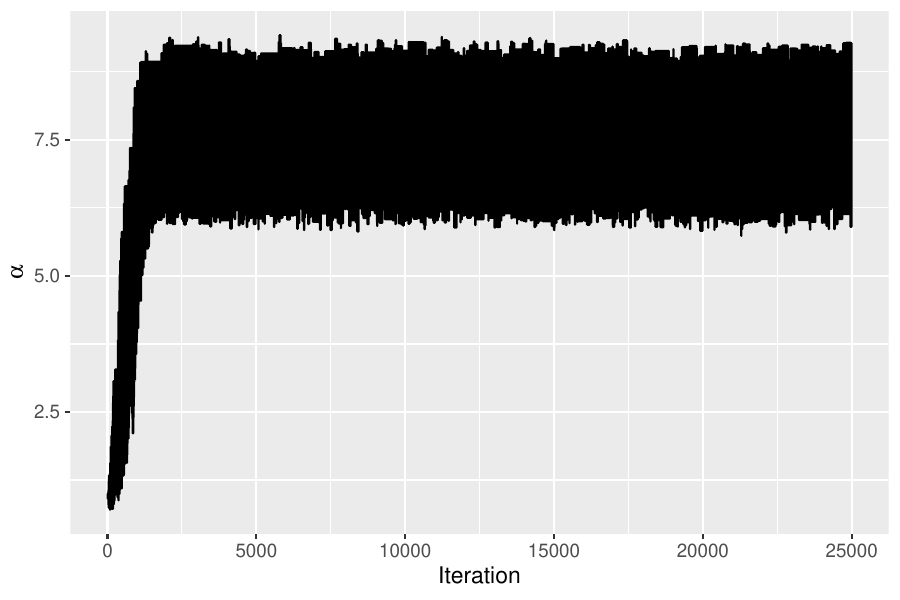}
\endminipage 
\caption{Trace plot of $\alpha$ for the breast cancer dataset, with $C=4$ from run with BMMx.}
\label{supp:BRCA_alpha_convergence}
\end{figure}

\begin{figure}[H]
\minipage{\textwidth}
\centering
  \includegraphics[width=0.8\linewidth]{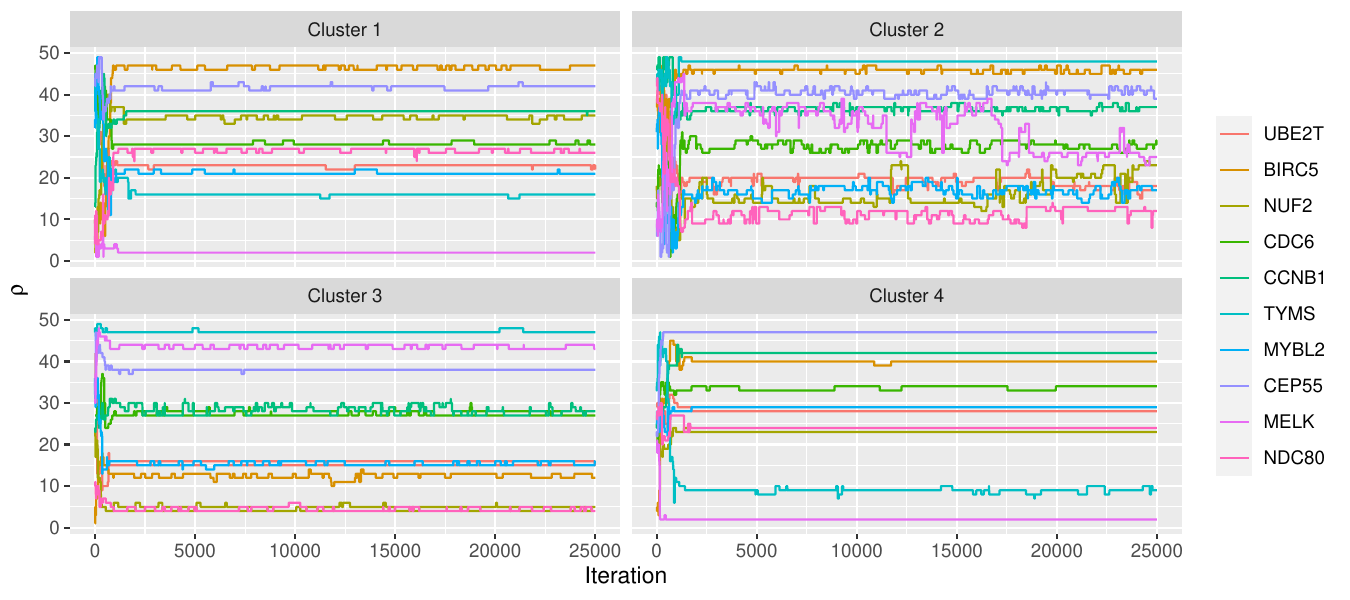}
\endminipage 
\caption{Trace plot of $\mathbf{\rho}_c$ for the breast cancer dataset, with $C=4$ from run with BMMx.}\label{supp:BRCA_rho_convergence}
\end{figure}

\begin{figure}[H]
\minipage{\textwidth}
\centering
  \includegraphics[width=0.8\linewidth]{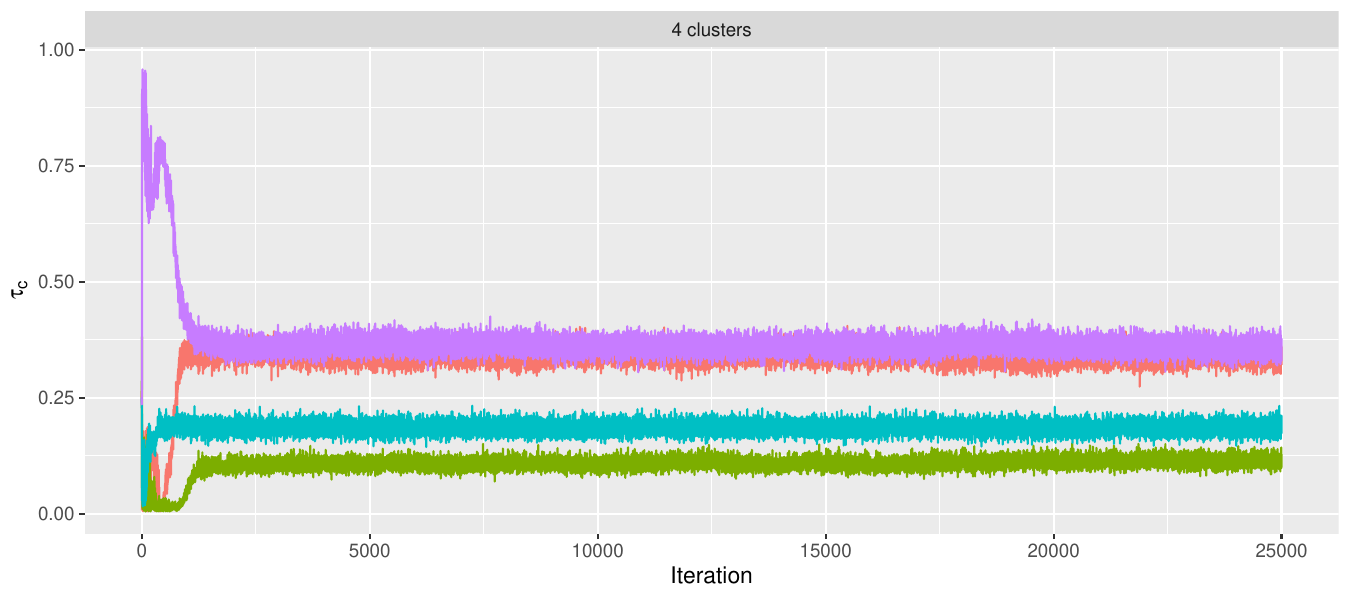}
\endminipage 
\caption{Trace plot of $\mathbf{\tau}_c$ for the breast cancer dataset, with $C=4$ from run with BMMx.}\label{supp:BRCA_cluster_convergence}
\end{figure}

\begin{table}[H]
\scriptsize
\centering
\begin{tabular}{rllll|llll}
  \hline
   & \multicolumn{4}{c|}{BMMx} & \multicolumn{4}{c}{BMM}\\ 
 & Cluster 1 & Cluster 2 & Cluster 3 & Cluster 4 & Cluster 1 & Cluster 2 & Cluster 3 & Cluster 4 \\ 
  \hline
1 & MIA & KRT17 & PTTG1 & MIA & PTTG1 & KRT17 & MIA & MIA \\ 
  2 & MELK & UBE2C & FGFR4 & MELK & FGFR4 & UBE2C & MELK & MELK \\ 
  3 & UBE2C & MIA & KIF2C & CCNE1 & KIF2C & MIA & CCNE1 & UBE2C \\ 
  4 & ACTR3B & ACTR3B & NDC80 & KRT17 & NDC80 & ACTR3B & KRT17 & ACTR3B \\ 
  5 & CCNE1 & CCNE1 & NUF2 & ACTR3B & NUF2 & CCNE1 & ACTR3B & CCNE1 \\ 
  6 & KRT17 & FOXC1 & MKI67 & UBE2C & MKI67 & MELK & UBE2C & KRT17 \\ 
  7 & MLPH & SLC39A6 & KRT17 & MLPH & KRT17 & FOXC1 & MLPH & MLPH \\ 
  8 & TMEM45B & MKI67 & SLC39A6 & TMEM45B & SLC39A6 & MKI67 & TMEM45B & TMEM45B \\ 
  9 & BCL2 & CENPF & PHGDH & TYMS & UBE2C & SLC39A6 & FGFR4 & TYMS \\ 
  10 & MDM2 & BCL2 & MIA & FGFR4 & MIA & BCL2 & TYMS & BCL2 \\ 
  11 & MKI67 & NAT1 & UBE2C & BCL2 & PHGDH & NAT1 & BCL2 & MDM2 \\ 
  12 & SLC39A6 & NDC80 & CENPF & MDM2 & CENPF & CENPF & MDM2 & MKI67 \\ 
  13 & ANLN & ORC6L & BIRC5 & KIF2C & BIRC5 & ORC6L & KIF2C & ANLN \\ 
  14 & KRT5 & NUF2 & CDC20 & ANLN & CDC20 & NDC80 & GRB7 & SLC39A6 \\ 
  15 & ORC6L & FOXA1 & MYBL2 & GRB7 & MYBL2 & FOXA1 & ANLN & KRT5 \\ 
  16 & TYMS & EGFR & UBE2T & PTTG1 & UBE2T & EGFR & PTTG1 & ORC6L \\ 
  17 & PHGDH & MYBL2 & NAT1 & ORC6L & NAT1 & ANLN & ORC6L & PHGDH \\ 
  18 & FOXA1 & ANLN & EGFR & KRT5 & EGFR & KRT5 & KRT5 & FOXA1 \\ 
  19 & EGFR & KRT5 & MMP11 & PHGDH & MMP11 & MYBL2 & PHGDH & EGFR \\ 
  20 & NAT1 & UBE2T & GRB7 & MKI67 & GRB7 & UBE2T & MKI67 & NAT1 \\ 
  21 & MYBL2 & MAPT & ANLN & FOXA1 & ANLN & MMP11 & FOXA1 & MYBL2 \\ 
  22 & CENPF & MMP11 & SFRP1 & SLC39A6 & SFRP1 & CDC20 & SLC39A6 & UBE2T \\ 
  23 & UBE2T & CDC20 & KRT5 & NUF2 & KRT5 & PHGDH & NUF2 & CENPF \\ 
  24 & FOXC1 & PHGDH & CDH3 & NDC80 & CDH3 & MAPT & NDC80 & FOXC1 \\ 
  25 & MMP11 & RRM2 & CXXC5 & EGFR & CXXC5 & NUF2 & EGFR & NDC80 \\ 
  26 & CDC20 & MELK & PGR & FOXC1 & PGR & RRM2 & FOXC1 & MMP11 \\ 
  27 & NDC80 & CDH3 & CDC6 & NAT1 & CDC6 & CDC6 & NAT1 & CDC20 \\ 
  28 & CDC6 & SFRP1 & CCNB1 & UBE2T & ORC6L & SFRP1 & UBE2T & CDC6 \\ 
  29 & RRM2 & CDC6 & KRT14 & MYBL2 & CCNB1 & CDH3 & MYBL2 & RRM2 \\ 
  30 & CDH3 & CXXC5 & ORC6L & CDC20 & KRT14 & PGR & CENPF & CDH3 \\ 
  31 & SFRP1 & PGR & RRM2 & CENPF & RRM2 & CXXC5 & CDC20 & PGR \\ 
  32 & PGR & KIF2C & BCL2 & MMP11 & BCL2 & MDM2 & MMP11 & SFRP1 \\ 
  33 & CXXC5 & MDM2 & BLVRA & BAG1 & BLVRA & TMEM45B & BAG1 & NUF2 \\ 
  34 & KRT14 & FGFR4 & FOXC1 & CDC6 & FOXC1 & BLVRA & CDC6 & CXXC5 \\ 
  35 & NUF2 & BAG1 & MYC & PGR & MYC & KRT14 & PGR & KRT14 \\ 
  36 & CCNB1 & BLVRA & EXO1 & SFRP1 & EXO1 & CCNB1 & SFRP1 & CCNB1 \\ 
  37 & BLVRA & CCNB1 & MDM2 & RRM2 & MDM2 & BAG1 & RRM2 & BLVRA \\ 
  38 & MYC & KRT14 & CEP55 & CDH3 & CEP55 & KIF2C & CDH3 & MYC \\ 
  39 & BAG1 & PTTG1 & CCNE1 & MYC & CCNE1 & MLPH & MYC & PTTG1 \\ 
  40 & PTTG1 & TMEM45B & GPR160 & BIRC5 & GPR160 & FGFR4 & BIRC5 & BAG1 \\ 
  41 & FGFR4 & CEP55 & FOXA1 & CXXC5 & FOXA1 & CEP55 & CXXC5 & FGFR4 \\ 
  42 & CEP55 & MLPH & MAPT & CCNB1 & MAPT & MYC & CCNB1 & KIF2C \\ 
  43 & GPR160 & MYC & TMEM45B & KRT14 & TMEM45B & PTTG1 & KRT14 & GPR160 \\ 
  44 & KIF2C & EXO1 & MELK & BLVRA & MELK & EXO1 & BLVRA & CEP55 \\ 
  45 & MAPT & GRB7 & MLPH & MAPT & MLPH & GPR160 & MAPT & BIRC5 \\ 
  46 & EXO1 & BIRC5 & ACTR3B & GPR160 & ACTR3B & BIRC5 & GPR160 & EXO1 \\ 
  47 & BIRC5 & GPR160 & TYMS & CEP55 & TYMS & GRB7 & CEP55 & MAPT \\ 
  48 & GRB7 & TYMS & BAG1 & EXO1 & BAG1 & TYMS & EXO1 & GRB7 \\ 
  49 & ERBB2 & ERBB2 & ERBB2 & ERBB2 & ERBB2 & ERBB2 & ERBB2 & ERBB2 \\ 
   \hline
\end{tabular}
\caption{Results of the breast cancer data experiment with $C=4$. CP consensus for each of the clusters with BMMx and BMM.}\label{supp:BRCA_CP_consensus}
\end{table}

\begin{figure}[H]
\minipage{\textwidth}
\centering
\includegraphics[width=0.8\linewidth]{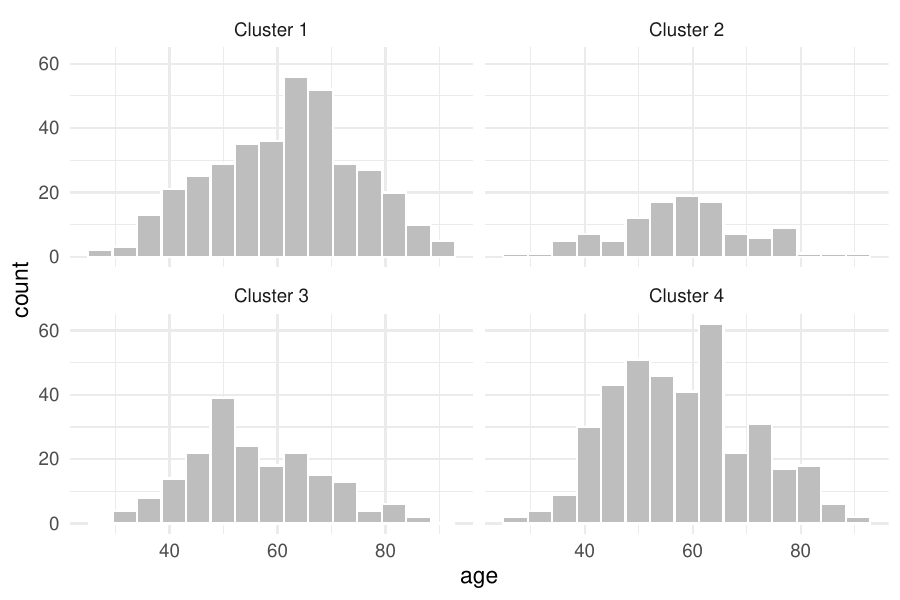}
\endminipage 
\caption{Distribution of covariate "age" according to cluster assignment from run with BMMx with $C=4$ for the breast cancer dataset.}
\label{supp:BRCA_cluster_assignment_age}
\end{figure}

\begin{figure}[H]
\minipage{\textwidth}
\centering
  \includegraphics[width=0.8\linewidth]{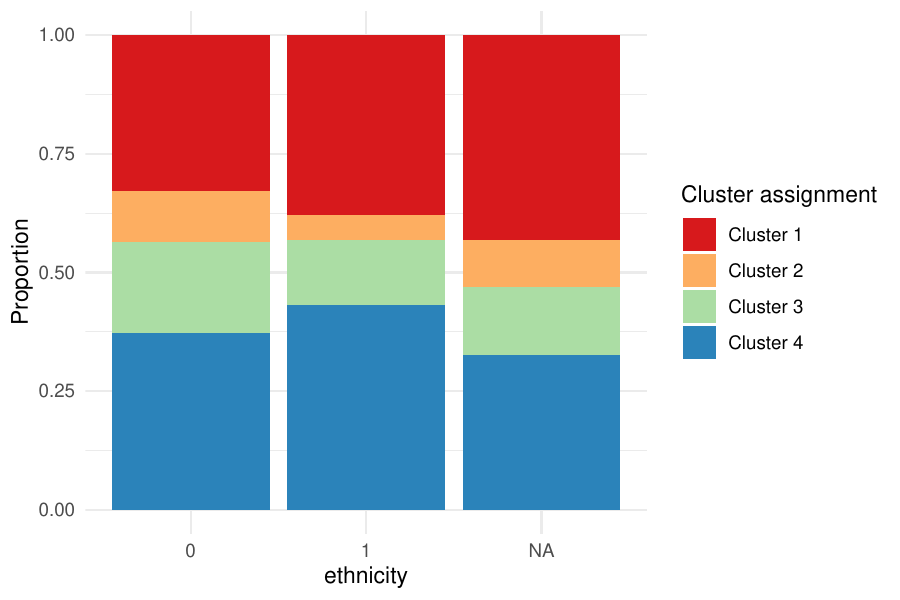}
\endminipage 
\caption{Distribution of covariate "ethnicity" according to cluster assignment from run with BMMx with $C=4$ for the breast cancer dataset.}
\label{supp:BRCA_cluster_assignment_ethnicity}
\end{figure}

\begin{figure}[H]
\minipage{\textwidth}
\centering
  \includegraphics[width=0.8\linewidth]{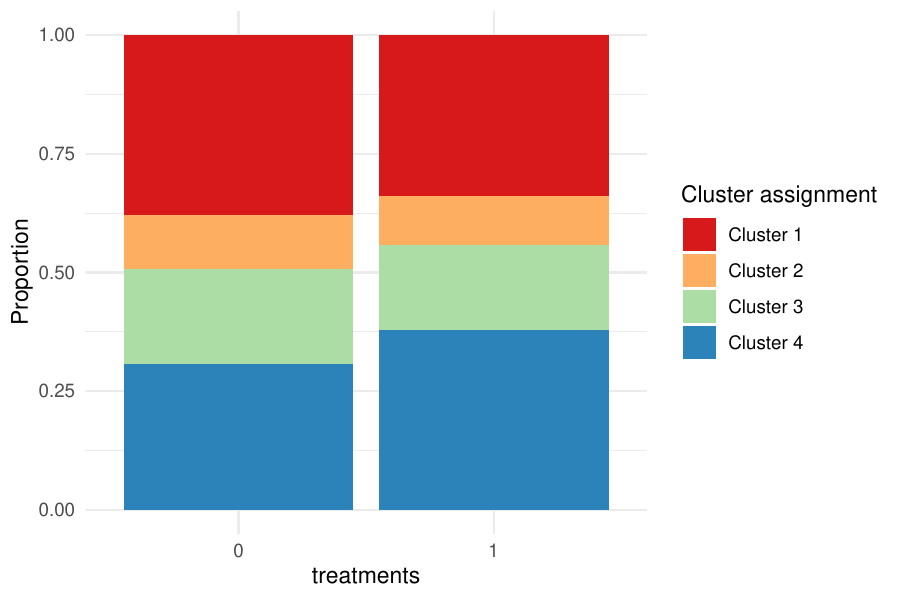}
\endminipage 
\caption{Distribution of covariate "treatments" according to cluster assignment from run with BMMx with $C=4$ for the breast cancer dataset.}
\label{supp:BRCA_cluster_assignment_treatments}
\end{figure}

\begin{figure}[H]
\minipage{\textwidth}
\centering
  \includegraphics[width=0.8\linewidth]{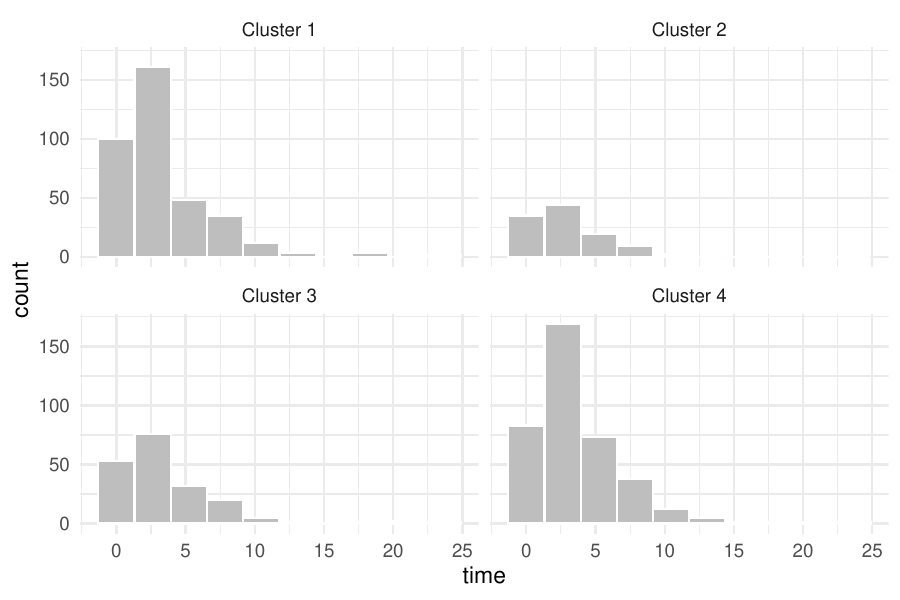}
\endminipage 
\caption{Distribution of covariate "time" according to cluster assignment from run with BMMx with $C=4$ for the breast cancer dataset.}\label{supp:BRCA_cluster_assignment_time}
\end{figure}

\begin{figure}[H]
\minipage{\textwidth}
\centering
  \includegraphics[width=0.8\linewidth]{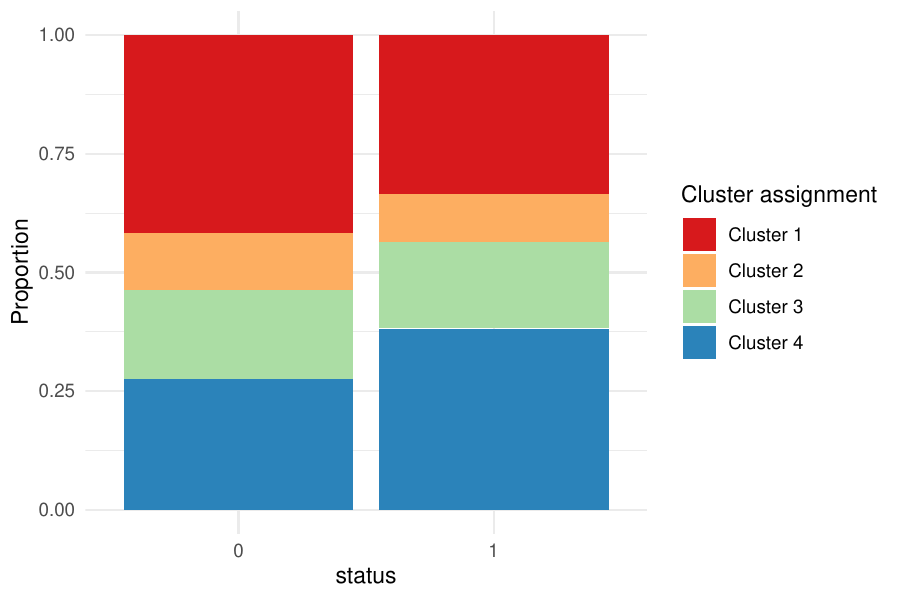}
\endminipage 
\caption{Distribution of covariate "status" according to cluster assignment from run with BMMx with $C=4$ for the breast cancer dataset.}\label{supp:BRCA_cluster_assignment_status}
\end{figure}

\end{document}